\newcommand{\ourname}{{\sc SkyCastle}\xspace}
\newcommand{\name}{{\sc SkyCastle}\xspace}
\newcommand{\ie}{\emph{i.e.,}\xspace}
\newcommand{\eg}{\emph{e.g.,}\xspace}
\newtheorem{lemma}{\textbf{Lemma}}
\newtheorem{theorem}{\textbf{Theorem}}
\newcommand{\revise}[1]{\textcolor{black}{#1}}
\newcommand{\tabincell}[2]{\begin{tabular}{@{}#1@{}}#2\end{tabular}}
	\providecommand\BibTeX{{%
			\normalfont B\kern-0.5em{\scshape i\kern-0.25em b}\kern-0.8em\TeX}}}
\begin{document}

\title{\name: Taming LEO Mobility to Facilitate Seamless and Low-latency Satellite Internet Services}

	\author{
	\IEEEauthorblockN{Jihao Li$^\dag$, Hewu Li$^\ddag$$^\ast$\textsuperscript{${^\P}$}, Zeqi Lai$^\ddag$$^\ast$\textsuperscript{${^\P}$}\thanks{${^\P}$~Hewu Li and Zeqi Lai are the corresponding authors.}, Qian Wu$^\ddag$$^\ast$, \\Weisen Liu$^\ddag$, Xiaomo Wang$^\dag$$^\pounds$, Yuanjie Li$^\ddag$$^\ast$, Jun Liu$^\ddag$$^\ast$, Qi Zhang$^\ast$}
	\IEEEauthorblockA{
		\textit{$^\dag$Department of Computer Science and Technology}, Tsinghua University\\ \textit{$^\ddag$Institute for Network Sciences and Cyberspace}, Tsinghua University\\
		\textit{$^\ast$Zhongguancun Laboratory}
		\quad \quad \textit{$^\pounds$China Academy of Electronics and Information Technology}
	}
}

	\maketitle
	
	\begin{abstract} 

Emerging integrated space and terrestrial networks (ISTN) built upon low earth orbit (LEO) satellite constellations aim at providing planet-wide Internet services, not only for \emph{residential users}, but also for \emph{mobile users}~(\eg in airplane and cruise scenarios). Efficiently managing \emph{global mobility} and keeping connections active for mobile users is critical for ISTN operators. However, our quantitative analysis identifies that existing mobility management~(MM) schemes suffer from frequent connection interruptions and long latency in ISTN scenarios. The fundamental challenge stems from a unique characteristic of ISTNs: not only users are mobile, but also \emph{core network infrastructures}~(\ie LEO satellites) are frequently changing their locations in the network.

To facilitate \emph{seamless and low-latency} satellite Internet services, this paper presents \name, a novel network-based \emph{global mobility management mechanism}. \name incorporates two key techniques to address frequent connection interruptions in ISTNs. First, to reduce the interruption time, \name adopts \emph{distributed satellite anchors} to track the location changes of mobile nodes, manage handovers and avoid routing convergence. Second, \name leverages an \emph{anchor manager} to schedule MM functionalities at satellites to reduce deployment costs while guaranteeing low latency. Extensive evaluations combining real constellation information and mobile user trajectories show that: \name can improve up to 55.8\% uninterrupted time and reduce 47.8\% latency as compared to other existing MM solutions.

\end{abstract}
	
	\begin{IEEEkeywords}
		Integrated Space and Terrestrial Networks, Satellite Internet Services, Global Mobility Management.
	\end{IEEEkeywords}
	
	\section{Introduction}
\label{sec:introduction}

\columnsep 0.12in

With the rapid development of aerospace technology, recently we have witnessed the vigorous deployment of satellite Internet mega-constellations such as SpaceX's Starlink~\cite{starlink}, Amazon's Kuiper~\cite{kuiper} and Telesat~\cite{telesat}. These constellations will deploy thousands of broadband satellites in low earth orbit~(LEO), and each satellite can be equipped with high-speed inter-satellite links~(ISLs)~\cite{laser2} as well as ground-satellite links~(GSLs)~\cite{KAKU} to interconnect with other satellites and ground facilities, extending the boundary of today’s Internet and constructing an integrated space and terrestrial network~(ISTN).

Upon the IP-based infrastructure~\cite{starlink_IP}, future ISTNs will support heterogeneous access technologies~(\eg Ka-/Ku-/E-band~\cite{KAKU} or personal communications services for phone-direct-to-satellite access~\cite{TMobile}) for global Internet services. ISTN users include not only residential customers who leverage fixed satellite terminals deployed on rooftops~\cite{starlink} to access satellites, but also global mobile users in airplanes~\cite{starlink_airline} and maritime cruises~\cite{starlink_maritime} which can roam globally over time. In practice, Starlink has recently released its global roaming services to enable Internet services all around the world~\cite{starlink_roam}. How to efficiently manage the mobility of mobile users~(\eg tracking the location of users and coping with handovers) and keep users' connections active is an important issue for ISTN operators.

Network-layer mobility management~(MM) has always been an important technique for managing user mobility, especially for heterogeneous mobile networks because it does not rely on or make any assumption about the underlying wireless access technologies. The network community has a long history studying on network-layer MM solutions such as IETF MIP~\cite{MIP,MIPv6} and its variations~\cite{HMIPv6,PMIPv6,FMIPv6}. The core idea behind existing solutions is to deploy \emph{anchor points} at certain fixed nodes in the network. Each mobile node first registers with a certain anchor point and then informs the anchor of its location updates during the roaming process. In addition, each mobile node is identified by its \emph{home address} allocated by its corresponding anchor, which further allows users to move from one network to another while maintaining a permanent address to keep connections active.

However, as our quantitative analysis shows in \S\ref{sec:motivation}, directly applying existing MM solutions in emerging ISTNs may suffer from \emph{low connection uninterrupted ratio} or \emph{high user-perceived latency}. The root cause relies on a unique characteristic differentiating ISTNs from other traditional mobile networks: not only the end users are mobile, but also the global network infrastructures~(\ie LEO satellite routers) which participate in route calculation are frequently changing their locations in the network. In an ISTN, if we directly follow existing solutions and deploy anchors on the ground~(\eg at ground stations), the combination of space-ground handovers and routing fluctuations can exacerbate service interruptions when users are roaming around the world. Moreover, as both users and satellites are moving, the user-to-anchor path may be lengthened, causing a significant increase in user-perceived latency over time.

To facilitate seamless and low-latency satellite Internet service, in this paper we present \name, a novel network-layer global mobility management mechanism for futuristic IP-based ISTNs. At a high level, \name divides all satellites into multiple \emph{clusters}, and exploits a collection of \emph{distributed satellite anchors} to manage the mobility of the users and the core network infrastructures. In particular, \name incorporates two techniques for global mobility management.

First, \name adopts a dynamic-anchor-based MM scheme, together with a convergence-free route mechanism to efficiently track the location of mobile nodes, while providing available forwarding paths when the ISTN topology changes. For users covered by the same cluster and managed by the same satellite anchor, they register with the anchor when they connect to the current cluster, and inform the anchor of their location updates if their locations change. Moreover, the location updates of a ground station are additionally sent to all other anchors in the ISTN to avoid route convergence.

Second, \name employs an \emph{anchor manager} integrating a series of algorithms to judiciously decide how a satellite operator can distribute anchor functionalities from a constellation perspective to reduce the number of interruptions, bound latencies to satisfy various user requirements, and reduce the deployment cost of satellite anchors.

To evaluate the effectiveness of \name, we build an experimental ISTN environment based on a recent simulator~\cite{starrynet} for satellite networks, and implement a prototype containing all core functionalities of \name. Through trace-driven evaluations combining real constellation information, mobile user trajectories~(\eg popular flight routes) and network simulation, we demonstrate that for representative global roaming scenarios in ISTNs, \name can: (i) improve the connection uninterrupted time by up to 55.8\% and by 34.5\% on average; (ii) decrease the user-perceived latency by up to 47.8\% and by 21.5\% on average, as compared to existing solutions.

\begin{figure}[tbp]
	\centerline{\includegraphics[width=0.9\linewidth]{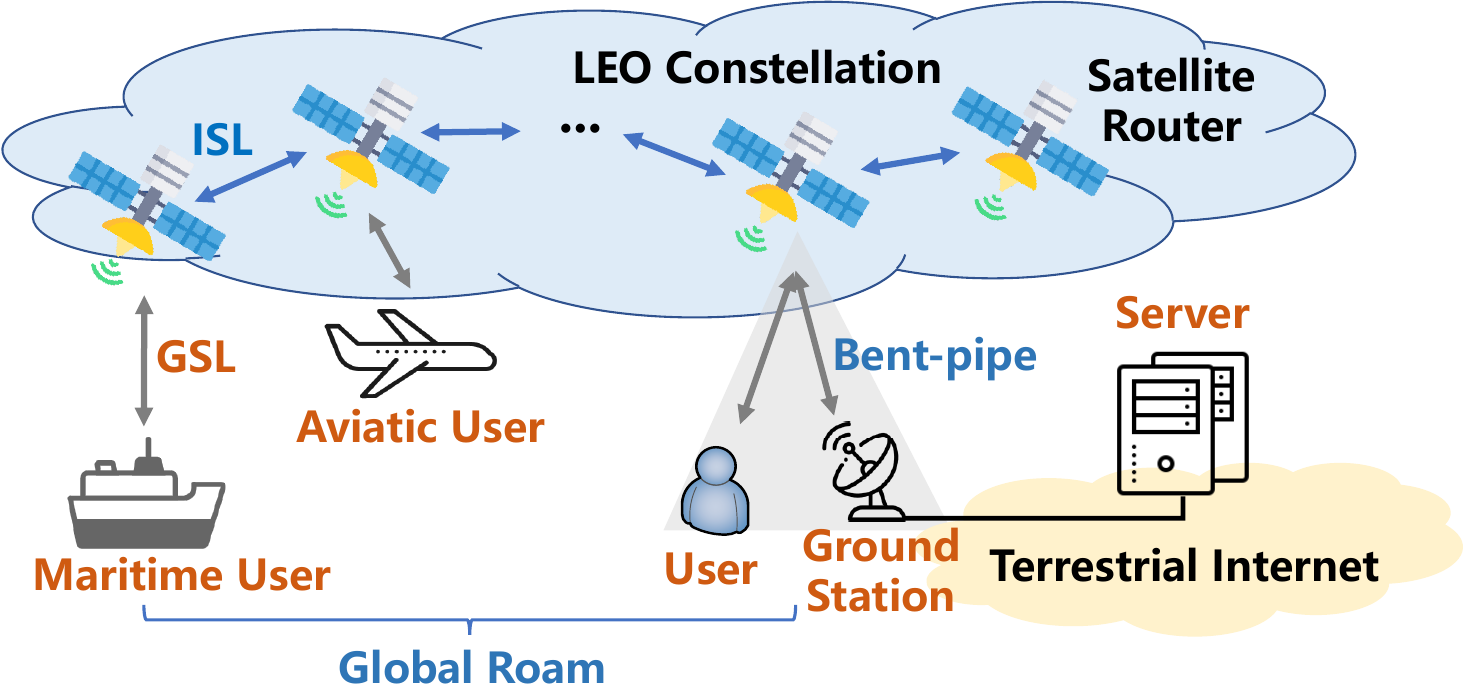}}
	\caption{A typical architecture of ISTN.}
	\label{fig:bent_pipe}
	\vspace{-0.6cm}
\end{figure}

Contributions of this paper can be concluded as follows.

\begin{itemize}[leftmargin=*]
	
	\item We expose and analyze the challenges for global mobility management in emerging ISTNs, where both end users and network infrastructures are inherently mobile~(\S\ref{sec:motivation}).
	
	\item We propose \name, a novel global mobility management mechanism exploiting dynamic and distributed satellite anchors to facilitate seamless and low-latency satellite Internet services globally~(\S\ref{sec:design_overview},\S\ref{sec:design_at_the_anchor_level},\S\ref{sec:design_at_the_network_level}). 
	
	\item We build a \name prototype and conduct trace-driven simulations to show its effectiveness on improving network availability for global roaming in ISTNs~(\S\ref{sec:evaluation}).
\end{itemize}

\addtolength{\topmargin}{0.05in}

\section{Technical Background}
\label{sec:technical_background}

\subsection{Integrated Space and Terrestrial Networks~(ISTNs)}
\label{subsec:leo_satellite_network}
 
Figure~\ref{fig:bent_pipe} plots a high-level architecture of emerging ISTNs upon LEO satellite constellations, and this architecture has been used by today's operational ISTNs such as Starlink~\cite{starlink}. In particular, LEO satellites are moving rapidly related to the earth surface. Each satellite can be equipped with laser inter-satellite links~(ISLs) to interconnect with each other, and radio ground-satellite links~(GSLs) to communicate with ground facilities such as satellite terminals and ground stations~(GSs). When users access Internet services via the ISTN, packets from the user side are first sent to a ground station through one or multiple satellites~(\ie by ``bent-pipe'' transparent forwarding if the user is close to the ground station, or by multi-hop routing for remote users), and then forwarded to the Internet by terrestrial fibers~\cite{groundpop}. Emerging ISTNs are aimed at providing Internet services to diverse users, such as residential customers~\cite{starlink}, and mobile users in recreational vehicles~(RVs)~\cite{starlink_mobility}, airplane~\cite{starlink_fly} and maritime scenarios~\cite{starlink_maritime}. In practice, recently Starlink releases its global roaming service~\cite{starlink_roam} to enable ubiquitous access on the earth via ISLs, and provides low-latency in-flight Internet across the globe~\cite{starlink_available_flight}.

\subsection{Network-layer Mobility Management~(MM)}
\label{subsec:ip_mobility_management}

\begin{figure}[tbp]
	\centerline{\includegraphics[width=0.79\linewidth]{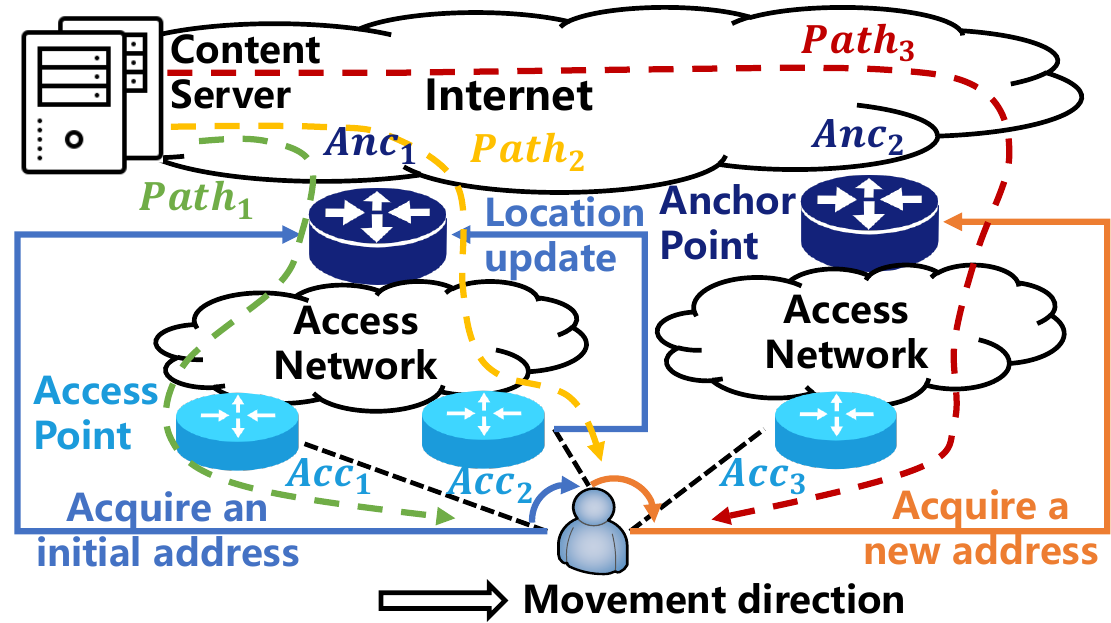}}
	\caption{Basic principles for mobility management~(MM).}
	\label{fig:MM}
	\vspace{-0.55cm}
\end{figure}

In global roaming scenarios, since users are continuously changing their locations in the network, it is important to track where users are, and deal with handovers to keep connections active. Network-layer mobility management~(MM), in which mobility-related features are deployed at the IP layer and signaling messages for mobility purposes are carried by IP traffic, is a classic approach to handle mobility issues, and has been well studied over the past decade in terrestrial Internet~\cite{MIP,HMIPv6,FMIPv6,PMIPv6,PFMIPv6,DMM,RDMM}. As plotted in Figure~\ref{fig:MM}, the core idea of network-layer MM is to exploit \emph{anchor points} to manage the locations and handovers for mobile users. Specifically, an anchor is a fixed node~(\eg a router or switch) maintaining the mobility status of users. When a mobile user connects to the network through an access point ($Acc_1$), it first registers with a specific anchor ($Anc_1$) and acquire an initial address. Traffic from or to the mobile user is first forwarded to the anchor and then to the final destination ($Path_1$). When the user changes location in the network (\ie connects to $Acc_2$) within the management area of the registered anchor ($Acc_1$), $Acc_2$ sends messages to $Anc_1$ and notifies the location change in real time. Correspondingly, the traffic routes according to $Path_2$ based on the new location information in $Anc_1$. In particular, once the user is managed by a new anchor ($Anc_2$), it needs to re-register and acquire a new address. Similarly, the traffic will pass through the new anchor point ($Path_3$).

\noindent
\textbf{Research scope.} In this paper, we focus on exploring the network-layer mobility management for ISTNs for three key reasons. First, to efficiently integrate satellite constellations into existing terrestrial Internet, emerging ISTNs like Starlink adopts IP-based networking architecture~\cite{starlink_IP} with heterogeneous wireless access technologies. Second, unlike link-layer solutions~(\eg~\cite{linkMM}), a network-layer solution does not rely on any assumption about the underlying wireless access technologies. Finally, as compared to high-layer approaches~(\eg~QUIC's connection identifier~\cite{quic}), a network-layer solution achieves higher handover efficiency~(\eg shorter disruption time). Collectively, solutions in other layers complement our work.

\begin{figure}[t]
	\centering
	\subfloat[Deploying anchors at GSs.]{
		\begin{minipage}[t]{0.448\linewidth}
			\centering
			\includegraphics[width=1.0\linewidth]{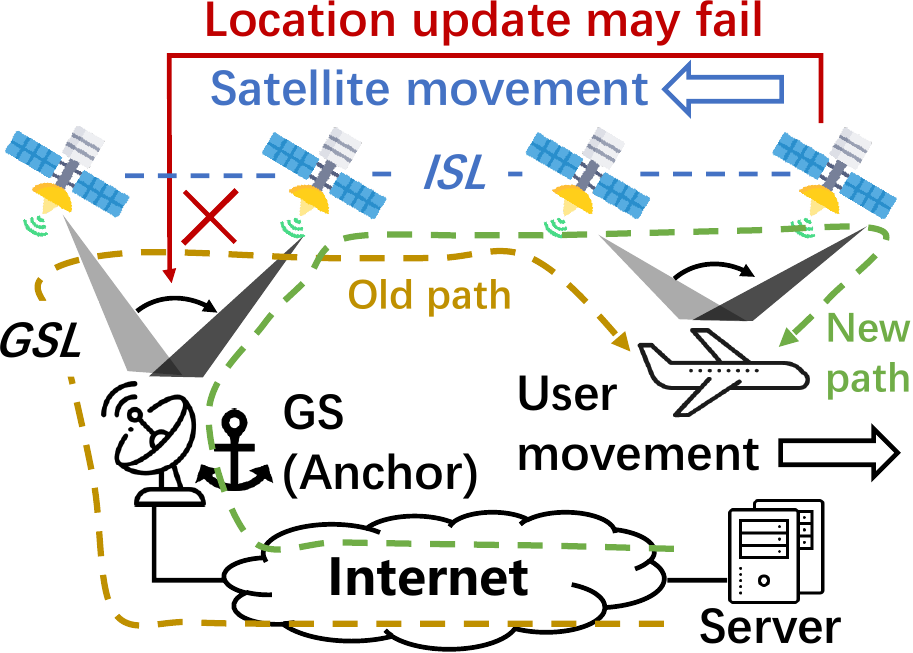}
			\label{fig:CMM_LSN}
			\vspace{-0.4cm}
		\end{minipage}%
	}\hspace{0.01cm}
	\subfloat[Deploying anchors at satellites.]{
		\begin{minipage}[t]{0.508\linewidth}
			\centering
			\includegraphics[width=1.0\linewidth]{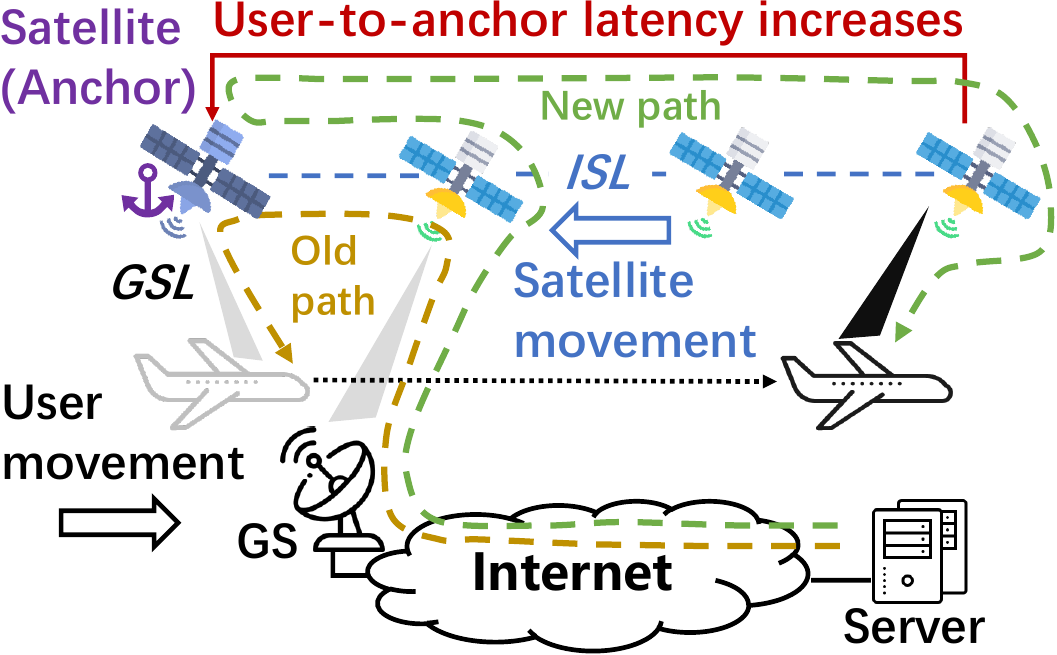}
			\label{fig:DMM_LSN}
			\vspace{-0.4cm}
		\end{minipage}%
	}%
	\centering
	\caption{The design choices of deploying MM in ISTNs.}
	\label{fig:MM_ISTN}
	\vspace{-0.6cm}
\end{figure}

\section{Quantitative Performance Analysis of Deploying Mobile Anchors in ISTNs}
\label{sec:motivation}

Given the unique characteristics of futuristic ISTNs and the basic principles of network-layer MM approaches above, we thus ask the question: how should satellite operators deploy MM solutions in their ISTNs to sustain good user experience under both LEO dynamics and user movements?

\subsection{Analysis Experiment Setup}
\label{subsec:motivation_experiment_setup}

We start our quest by quantitatively analyzing the attainable performance of directly applying existing MM solutions in ISTNs. We build an ISTN simulator based on~\cite{starrynet}, an open-source simulation tool for satellite networks. The ISTN simulator can: (i) mimic the LEO dynamics as well as the corresponding network behaviors~(\eg link connection/disconnection); and (ii) run routing protocols in simulated routers~(\ie satellites and GSs) and (iii) load different MM approaches. Specifically, we simulate an ISTN based on the public information of the first shell of Starlink~\cite{starlink_fcc}, with 1584 LEO satellites in the well-known +Grid topology~\cite{motif,hypatia,DPR} and about 200 distributed GSs~\cite{starlink_GS}. Each satellite connects to the two front and rear satellites in the same orbit and two left and right satellites in its adjacent orbits to construct a gird-like structure.

\noindent
\textbf{Metrics.} Ideally, an MM approach should accomplish: (i) \emph{seamless handover}, which indicates that the connection interruption time should be short during handovers; and (ii) \emph{low latency}, \ie the user perceived latency after handovers is expected to be limited in an acceptable range. Accordingly, we consider two key performance metrics: (i) \emph{end-to-end connection uninterrupted ratio~(CUR)}, which is defined as the percentage of time that the network connection between the server and user is available, and (ii) \emph{round-trip time~(RTT)}.

\subsection{Deploying Anchors at GSs Suffers from Frequent and Long Connection Interruptions}
\label{subsec:motivation_anchor_on_GSs}

A straightforward approach of applying existing MM approaches in ISTNs is to deploy anchors at some fixed nodes in the terrestrial network~(\eg at GSs), as shown in Figure~\ref{fig:CMM_LSN}. In this case, a mobile user~(\eg airplane) registers with an anchor at a nearby GS. When the user changes their location, the satellite to which the user connects~(\ie the ingress satellite) sends notifications to notify the anchor where the user is. Traffic from the server is first forwarded to the GS anchor of the user and then to the user's ingress satellite, and finally to the user.

\noindent
\textbf{Observations.} Figure~\ref{fig:reliability_motivation} and~\ref{fig:delay_motivation} plot the end-to-end CUR and RTTs when anchors are deployed at GSs. Our experiment results reveal that deploying anchors on GSs can result in poor network continuity. As shown in Figure~\ref{fig:reliability_motivation}, The server-to-user CUR is only 79\% on average, indicating that the connection is unavailable for at least 20\% of the entire session.

\noindent
\textbf{Root cause.} On our in-depth analysis, we find that the low connection uninterrupted ratio is caused by the \emph{routing instability} between users' ingress satellites and corresponding anchors at GSs due to space-ground handovers. As LEO satellites move, the ISTN topology changes frequently, which results in routing fluctuations and temporal route unreachability. As shown in Figure~\ref{fig:CMM_LSN}, after the handover, \emph{the route from the ingress satellite to GS needs to be recalculated}. During this routing convergence period, the location update may fail and the GS does not know the latest location of the user. Consequently, the server cannot correctly send traffic to the mobile user.

\begin{figure}[tbp]
	\centering
	\subfloat[CUR.]{
		\begin{minipage}[t]{0.18\linewidth}
			\centering
			\includegraphics[width=1.0\linewidth]{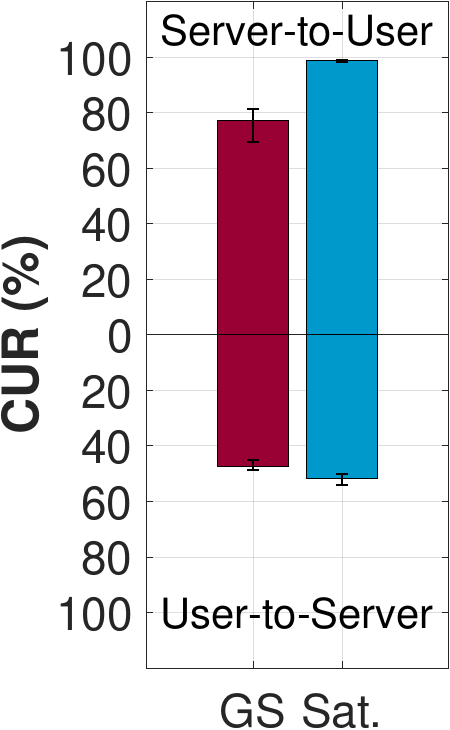}
			\label{fig:reliability_motivation}
			\vspace{-0.4cm}
		\end{minipage}%
	}
	\vspace{-0.2cm}
	\subfloat[RTT.]{
		\begin{minipage}[t]{0.23\linewidth}
			\centering
			\includegraphics[width=1.0\linewidth]{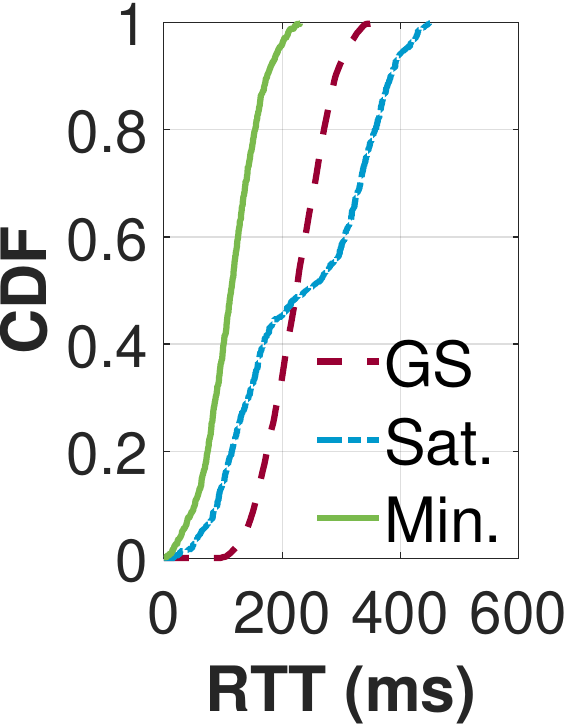}
			\label{fig:delay_motivation}
			\vspace{-0.4cm}
		\end{minipage}%
	}
	\subfloat[RTT increases over time.]{
		\begin{minipage}[t]{0.53\linewidth}
			\centering
			\includegraphics[width=1.0\linewidth]{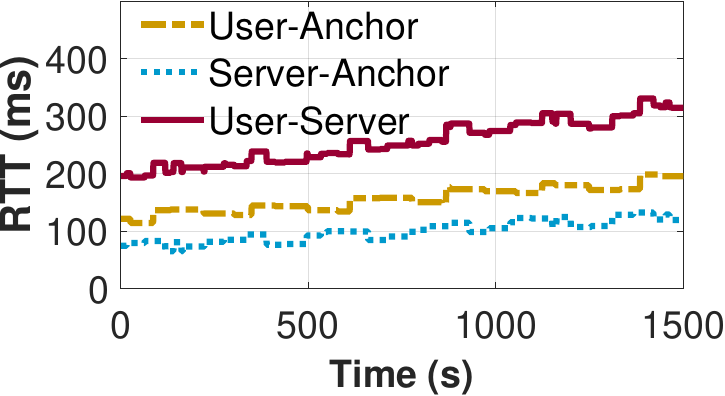}
			\label{fig:vary}
			\vspace{-0.4cm}
		\end{minipage}%
	}%
	\centering
	\caption{Performance of applying anchors at GSs and satellites.}
	\label{fig:motivation}
	\vspace{-0.5cm}
\end{figure}

\subsection{Deploying Anchors at Satellites is Latency-limited}
\label{subsec:motivation_anchor_on_sat}

Another viable path to tackle mobility is to deploy anchors at access points for distributed mobility management~\cite{DMM}, \ie deploying anchors at the ingress satellites. As shown in Figure~\ref{fig:DMM_LSN}, in this case, each satellite works not only as a router to forward network traffic, but also as an anchor to manage the location and handover of users that register with it.

\noindent
\textbf{Observations.} Figure~\ref{fig:reliability_motivation} and~\ref{fig:delay_motivation} plots the end-to-end CUR and RTTs when users choose their \emph{initial} ingress satellite as the anchor. We observe that deploying anchors on satellites can significantly improve the server-to-user CUR, but it still suffers from high user-perceived RTTs. Figure~\ref{fig:vary} plots the RTTs changing over time. As users roam around the world and satellites move around the earth, we observe that the RTT can increase by about 100ms over a period of 25 minutes.

\noindent
\textbf{Root cause.} While LEO satellites move at a high velocity relative to the earth's surface,  the relative positions of the front/rear satellites in the same orbit, and those of the left/right satellites in adjacent orbits, can remain stable~\cite{motif}, because they operate at the same orbital altitude and move at the same velocity. Therefore, although the routes from satellites to GSs suffer from instability, \emph{the routes between satellites are relatively stable} due to fixed inter-satellite topology. As a result, when anchor points are deployed at satellites, the ingress satellite can always correctly and efficiently update users' location to the anchor. In this case, packets are sent from the GS to satellite anchor firstly, and then forwarded to the user's latest ingress satellite according to anchor's location information. Note that although the optimal path from the GS to the satellite anchor also needs to be recalculated, the GS can send packets to any satellite it is currently connected to. This satellite always maintains a stable and valid route to the satellite anchor, ensuring that there is an available route from the GS to the anchor. However, the latency can significantly increase as both the anchor and the user move at a high velocity, which prolongs the user-to-anchor path.

\subsection{Takeaways}
\label{subsec:motivation_takeaways}

In addition to the above analysis, our observations from Figure~\ref{fig:reliability_motivation} indicate that all current MM mechanisms achieve only about 50\% of the user-to-server CUR. The primary reason for this is that the routing instability, as mentioned above, not only leads to the failure of location updates but also causes interruptions in the data traffic from users to servers. Therefore, our quantitative analysis exposes the problems and challenges of conducting mobility management in ISTNs as follows.
\begin{itemize}[leftmargin=*]
	\item Since emerging ISTNs will provide Internet services for mobile customers such as airplanes and cruise users in motion, it should be  important to effectively manage mobility and keep connections active during handovers in ISTNs.
	
	\item The key of MM is to leverage an \emph{anchor point} to track user's time-varying location and forward network traffic. However, naively deploying anchors either suffers from long connection interruption time~(\eg deploying anchors at GSs), or can result in high latency~(\eg deploying anchors at satellites).
	
	\item Our in-depth analysis reveals that the root cause for the above inefficiency is that current MM approaches only consider the user-level mobility but ignore the unique infrastructure-level mobility in an ISTN environment, which results in routing instability and long latency between users and anchors.
\end{itemize}


	\section{\name Overview}
\label{sec:design_overview}

\subsection{Key Ideas Behind \name}

We present \name, a novel global mobility management mechanism with the key ideas as follows.

\begin{itemize}[leftmargin=*]
	
	\item \textbf{(i) Judiciously deploying distributed anchors at LEO satellites to efficiently manage the mobility of both users and network infrastructures.} 
	Taking the satellite constellation as the frame of reference, users and GSs continuously change their topological locations in the ISTN. \name leverages \emph{anchors at satellites} to track where a user or GS is and cope with their handovers. Specifically, when a user or GS connects to a new ingress satellite, its new ingress satellite sends an MM message to its satellite anchor to notify its current location. The ingress-to-anchor path is stable since the network topology of the space segment in the same orbital altitude is stable. 
	
	\item \textbf{(ii) Dynamically assigning satellite anchors to users to avoid triangular routing and attain low latency.} Because satellite anchors are moving, the user-to-anchor path can be prolonged over time. To avoid the latency increase, \name dynamically updates the anchor assignments and always chooses anchors satisfying the latency requirements while avoiding frequent anchor changes for users.

\end{itemize}

\noindent
\subsection{\name Architecture}

Figure~\ref{fig:overview} plots the high-level overview of our \name design. A mobile user~(\eg an airplane) visits its destination content server via a path built upon a sequence of LEO satellites and GSs. Based on this typical architecture, \name incorporates two core components: (i) a collection of \emph{satellite anchors} which are deployed at a subset of LEO satellites in the ISTN to handle the mobility; and (ii) an \emph{anchor manager} deployed at the satellite operator's control center which runs a series of algorithms to dynamically decide which satellites should carry the anchor functionality, and decide how geo-distributed mobile users should be assigned to different anchors.

\begin{figure}[tbp]
	\centerline{\includegraphics[width=\linewidth]{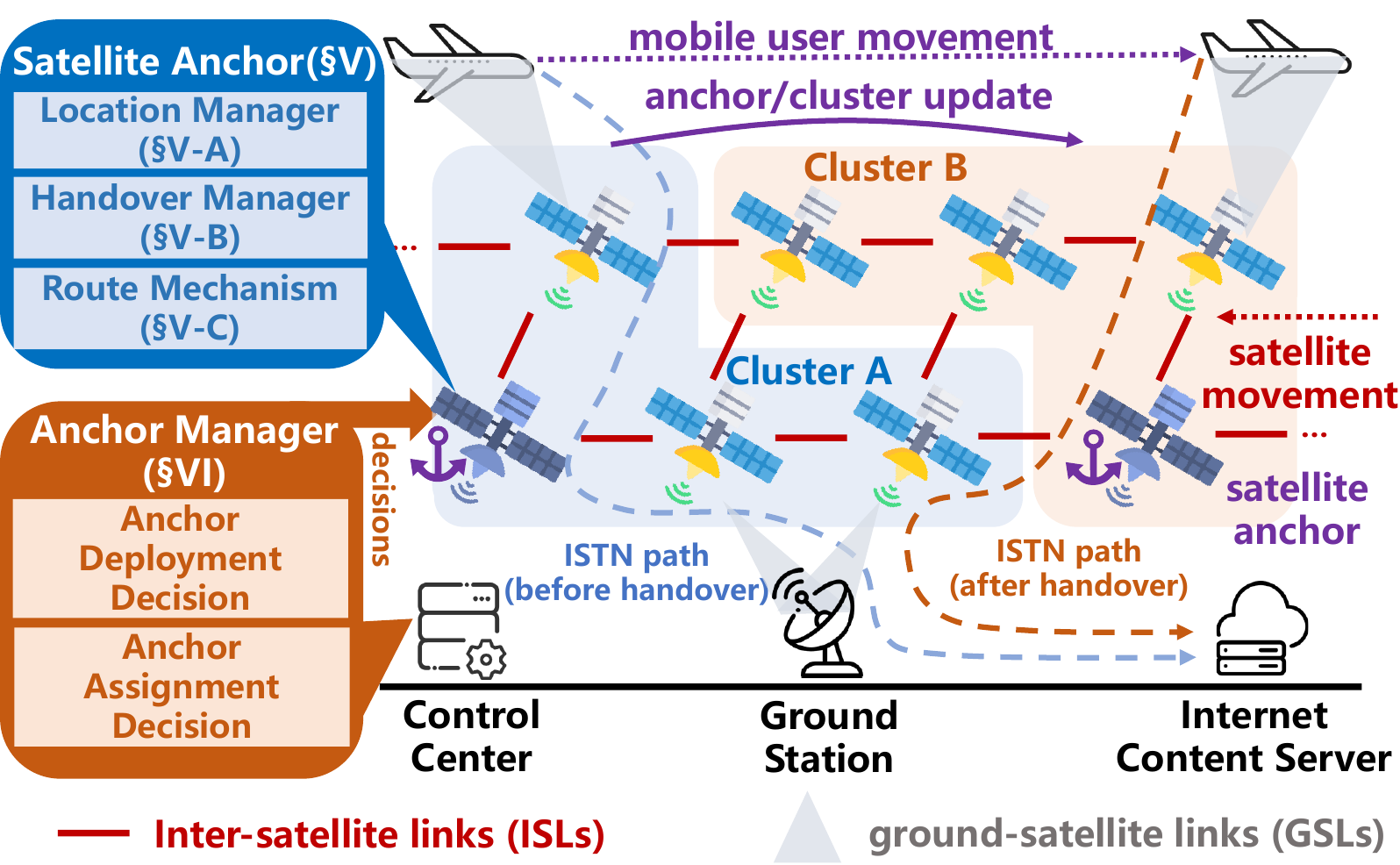}}
	\vspace{-0.05in}
	\caption{Overview of \name.}
	\label{fig:overview}
	\vspace{-0.2in}
\end{figure}

\noindent
\textbf{Satellite anchor and cluster division~(\S\ref{sec:design_at_the_anchor_level}).} \name divides all satellites into multiple disjoint \emph{clusters}. Each cluster is a satellite set which contains a collection of geographically adjacent satellites. One satellite in a cluster is selected as the \emph{anchor point of this cluster}. A satellite anchor manages the mobility of all GSs and mobile users that connect to satellites in this cluster. In addition to managing the location~(\S\ref{subsec:location_management}) and handover~(\S\ref{subsec:design_handover_management}) for users and GSs, an anchor also adopts a convergence-free route mechanism ~(\S\ref{subsec:design_route_management}) to deal with routing fluctuations and accomplish fast route recovery during LEO dynamics to reduce the connection interruption time. 

\noindent
\textbf{Anchor manager~(\S\ref{sec:design_at_the_network_level}).} Since both users and satellites are mobile, a user may change their current cluster as well as their corresponding satellite anchor. Thus, the goal of \name's anchor manager is to judiciously place anchor functionalities at different satellites and decide proper anchor assignments for geo-distributed users to guarantee that the end-to-end latency would not exceed a performance threshold specified by the satellite operator, while ensuring infrequent anchor/cluster change to reduce the interruption time for mobile users.

\noindent
\textbf{Working stages.} We describe an example to illustrate the working stages of \name. First, as shown in Figure~\ref{fig:overview}, the operator invokes the anchor manager to calculate and decide how to divide all satellites into multiple clusters and which satellite is selected to deploy anchor functionalities. These decisions are delivered to satellites via distributed ground stations. When a user connects to the ISTN, it registers the current location information with the anchor. The anchor then assigns a unique IP address to the user. Interactive traffic between the user and their target server is forwarded via the anchor. Due to the mobility of both satellites and users, a user may change to another cluster, which triggers an anchor update. After the cluster handover, interactive traffic between the user and server is carried over a new path via the new anchor.

	\section{\name at the Anchor Level: Core Functionalities for ISTN Mobility Management}
\label{sec:design_at_the_anchor_level}

\subsection{Location Management}
\label{subsec:location_management}

The goal of location management is to track the location of mobile nodes in a network. The key difference between a user and GS node is that: GS nodes fundamentally belong to network infrastructure nodes~(\eg routers) and run routing protocols to calculate routes, while user nodes are located at the edge of networks and do not participate in routing calculation. In addition, users' IP addresses are allocated by their anchor points while the IP addresses of GSs are configured by the satellite operator and they are fixed and do not change. In \name, the location information of a node is denoted as a key-value tuple, where the key is the IP address of the node and the value is the IP address of its ingress satellite. We next describe the location management for mobile user nodes and GS nodes respectively, which is also plotted in Figure~\ref{fig:handover}.

\begin{figure}[tbp]
	\centerline{\includegraphics[width=\linewidth]{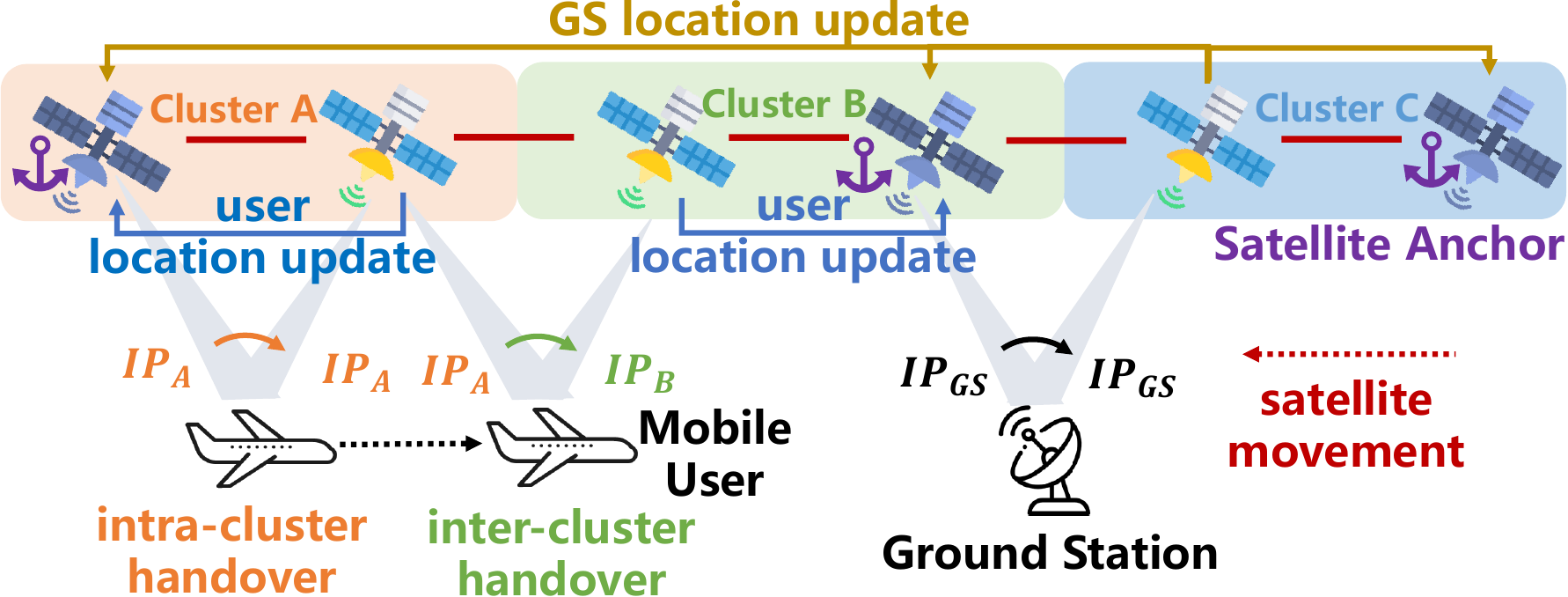}}
	\caption{Mobility management in \name. Each ingress satellite sends the location update message of mobile users to the anchor in current cluster, and sends the location update message of GSs to all anchors in the ISTN.}
	\label{fig:handover}
	\vspace{-0.5cm}
\end{figure}

\noindent
\textbf{Location management for users.} The IP address of a node implicitly indicates its location in the network. Once a mobile user node connects to the ISTN, its ingress satellite generates the location information of the user, and registers with the anchor node. If the user moves and changes its ingress satellite, the new ingress satellite updates the location information of the user and notifies the anchor point in the new cluster.

\noindent
\textbf{Location management for GSs.} When a GS node changes its position in the network, its ingress satellite notifies \emph{all anchor points} with the latest location information of the GS node. By this method, all anchors in the ISTN can track the position of all GSs as well as their ingress satellites in real time, and thus avoid the routing convergence process when the network topology changes. When a satellite needs to forward traffic to a terrestrial server, data packets are first forwarded to its anchor point, then to the GS, and finally to the destination.

\subsection{Handover Management}
\label{subsec:design_handover_management}

\noindent
\textbf{Intra-cluster handover.} For a mobile user, their IP address is allocated by the corresponding anchor when it connects to the cluster and registers with the anchor. As the mobile user moves within a cluster, their anchor point, as well as allocated address, remains unchanged, ensuring the user's higher-layer connections~(\eg TCP connections) stay active and stable.

\noindent
\textbf{Inter-cluster handover.} Because satellites are inherently mobile, the user-to-anchor path might lengthen over time and increase latency. To avoid the excessive latency caused by far anchor points, \name triggers a cluster/anchor update to guarantee the user-perceived latency is within an expected threshold. The new anchor in the new cluster allocates a new address to the user. \name's anchor manager takes care of the anchor assignment and guarantees user-perceived latency while avoiding inter-cluster handovers as much as possible.

\begin{figure}[tbp]
	\centering
	\subfloat[Route from a GS to a user.]{
		\begin{minipage}[t]{0.505\linewidth}
			\centering
			\includegraphics[width=1.0\linewidth]{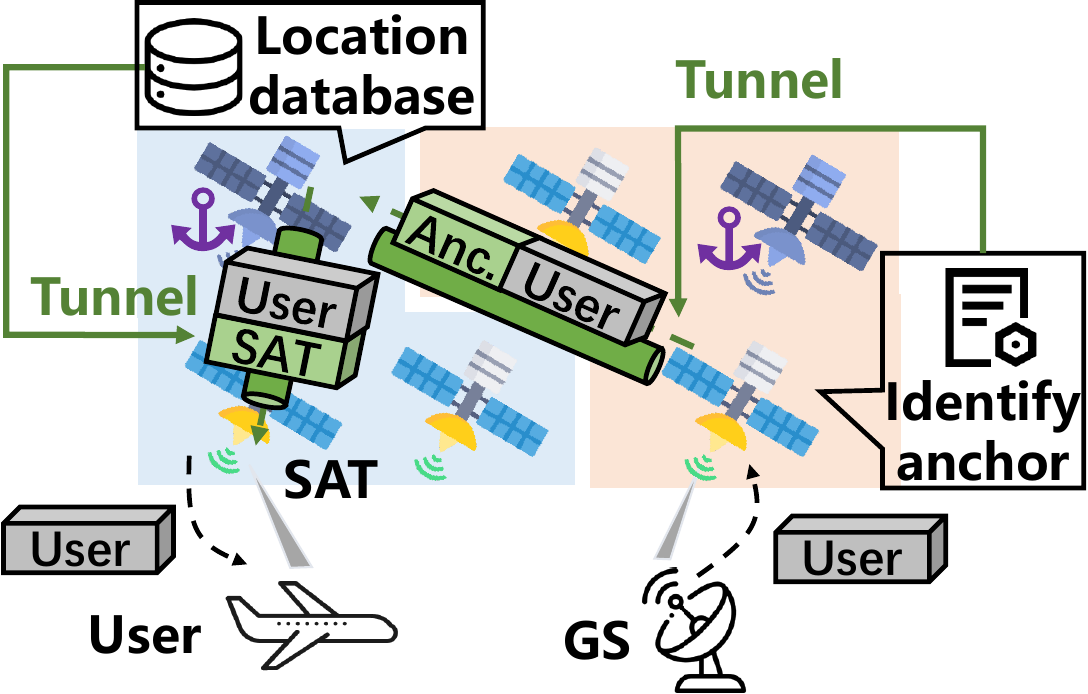}
			\label{fig:GS-to-MN}
			\vspace{-0.3cm}
		\end{minipage}%
	}\hspace{0.1cm}
	\subfloat[Route from a user to a GS.]{
		\begin{minipage}[t]{0.435\linewidth}
			\centering
			\includegraphics[width=1.0\linewidth]{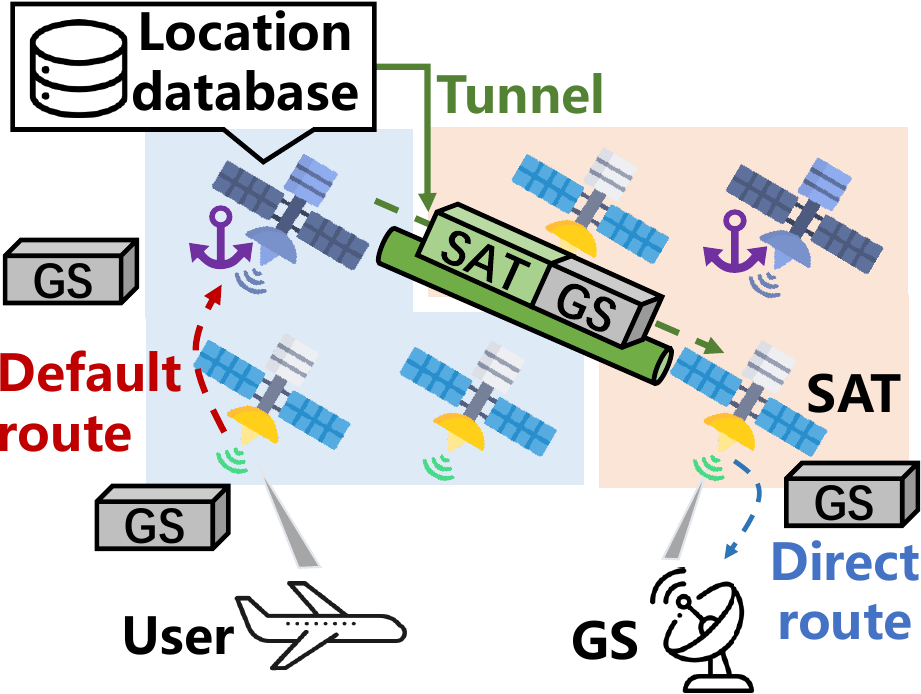}
			\label{fig:MN-to-GS}
			\vspace{-0.3cm}
		\end{minipage}%
	}%
	\centering
	\caption{Passing-anchor route in \name. The information in the cuboid represents the destination of the packet.}
	\label{fig:route}
	\vspace{-0.5cm}
\end{figure}

\subsection{Convergence-free Route Mechanism}
\label{subsec:design_route_management}

Anchors also intercept and reroute packets based on location information. As described in \S\ref{subsec:location_management}, users' IP addresses are allocated by satellite anchors while the fixed addresses of GSs are configured by the operator. Note that the addresses allocated by different anchors have different prefixes, which help \name to guide routing. Figure~\ref{fig:GS-to-MN} plots the route from a GS to user. When the ingress satellite of GS receives a packet to a user, it determines which satellite anchor manages the user according to the prefix and generates a tunnel to the anchor. When the anchor receives the packet, it finds the ingress satellite of the user (SAT in Figure~\ref{fig:GS-to-MN}) and generates a tunnel. Finally, SAT directly forwards the packet to the user by its forwarding information base (FIB) table. Further, Figure~\ref{fig:MN-to-GS} plots the route from a user to GS. Traditionally, the topology fluctuates due to handovers in GSs may trigger route recalculation, thus the user-to-GS traffic is interrupted. In \name, anchors also manage the GSs' locations to correctly forward packets without routing convergence. Specifically, when the ingress satellite receives a packet, the destination address of which is not allocated from any anchor, the satellite routes the packet to its anchor by default. Then the anchor determines the ingress satellite of the GS (SAT in Figure~\ref{fig:MN-to-GS}) and generates a tunnel. When the packet reaches SAT, it is forwarded to the GS by a direct route.
	\section{\name at the Network Level: Constellation-wide Anchor Management}
\label{sec:design_at_the_network_level}


\subsection{Understanding the Problem}
\label{subsec:design_formulation}

\noindent
\textbf{Problem formulation.} \revise{Let $S = \{s_1,s_2,...,s_m\}$, $U = \{u_1, u_2, ..., u_n\}$ and $G = \{g_1, g_2, ..., g_h\}$ denote a satellite set containing $m$ satellites, a user set containing $n$ users and a GS set containing $h$ GSs, respectively. Let the binary value $C_{i,j} = 1$ denote anchor $s_j$ manages $s_i$. Assume time is slotted as $\tau =\{t_1,t_2,...,t_T\}$. 
Let $\alpha_{i,j,k} = 1$ denote $s_j$ is visible to $u_i$ in $t_k$. $\alpha^{*}_{i,j,k} = 1$ means $u_i$ connects to $s_j$ in $t_k$. Similarly, $\beta_{i,j,k} = 1$ denotes $s_j$ is within $g_i$'s view in $t_k$ and $\beta^{*}_{i,j,k} = 1$ denotes $g_i$ connects to $s_j$ in $t_k$. Otherwise, $\alpha_{i,j,k}, \alpha^{*}_{i,j,k}, \beta_{i,j,k}$ and $\beta^{*}_{i,j,k}=0$. We use hops of the source and destination satellites to represent the latency between a communication pair. $D_{i,j}$ denotes the minimum hops between $s_i$ and $s_j$.}

Once an inter-cluster handover happens, the user needs to choose a new anchor and acquire a new address, which may interrupt upper-layer connections. The anchor manager aims to \emph{maximize the time users are managed by the same anchor} with various constraints by determining the anchor deployment and assignment. The anchor deployment and assignment~(ADA) problem can be formulated as follows.
\begin{alignat}{2}
\max \quad & \label{eq:objective}\revise{\sum_{i=1}^{n}\sum_{p,q,A=1}^{m}\sum_{k=1}^{T}\alpha^{*}_{i,p,k}\cdot\alpha^{*}_{i,q,k-1}\cdot C_{p,A}\cdot C_{q,A},}\\
\mbox{s.t.}\quad
&\label{eq:acc_constraint1}\revise{\alpha^{*}_{i,j,k} \le \alpha_{i,j,k}, \forall i \in [n]\footnotemark, j \in [m], k \in [T],}\\
&\label{eq:acc_constraint2}\revise{\beta^{*}_{i,j,k} \le \beta_{i,j,k}, \forall i \in [h], j \in [m], k \in [T],}\\
&\label{eq:delay_constraint}\alpha_{i,j,k}^*\cdot\beta_{i',j',k}^*\cdot[C_{j,A}\cdot(D_{j,A}+D_{A,j'})-D_{j,j'}]\le H \notag,\\
& \revise{\forall i \in [n], i' \in [h], j, j', A \in [m], k \in [T],}\\
& \label{eq:cluster_constraint}\revise{\sum_{j=1}^{m}C_{j,j'}=1, \forall j' \in [m].}
\end{alignat}

\footnotetext{$[n]$ is the set $\{i \in \mathbb{N} | i \le n\}$.}

\revise{The objective function~(\ref{eq:objective}) indicates the number of all tuples = $<i,p,q,k,A>$ satisfying $u_i$ connects to $s_p$ and $s_q$ in time $t_k$ and $t_{k-1}$, which belong to the same cluster managed by $s_A$.} The function minimizes the anchor switching times, \ie maximizes the CUR. Constraints~(\ref{eq:acc_constraint1}) and (\ref{eq:acc_constraint2}) indicate that a user or GS can only connect to visible satellites. Since the route between two satellites passes the anchor, we require that the maximum number of hops for a route in \name should not exceed the shortest path by more than $H$ hops to provide a low-latency services. Therefore, the constraint is formulated as constraint~(\ref{eq:delay_constraint}), \revise{which denotes the route from $s_{j}$ (ingress satellite of $u_i$) to $s_{j'}$ (ingress satellite of $g_{i'}$) via $s_A$ (anchor of $s_j$) is at most $H$ hops longer than the shortest path.} Finally, constraint~(\ref{eq:cluster_constraint}) denotes each satellite is managed by one anchor.

\begin{lemma}
\label{lemma:detour}
\revise{Given $s_i$ and anchor $s_A$, the maximum extra hops from any other satellite $s_j$ passing through $s_A$ to $s_i$ is $2 \cdot D_{A,i}$.}
\end{lemma}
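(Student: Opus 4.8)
The plan is to bound the length of the detour path $s_j \to s_A \to s_i$ by comparing it against the shortest path $s_j \to s_i$ directly, using only the triangle inequality on the hop-distance metric $D_{\cdot,\cdot}$. First I would observe that the detour route has length exactly $D_{j,A} + D_{A,i}$ hops, while the shortest route between the same endpoints has length $D_{j,i}$ hops; the quantity we must upper-bound is therefore the difference $(D_{j,A} + D_{A,i}) - D_{j,i}$. The key structural fact is that $D$ is a genuine metric on the (relatively stable) inter-satellite grid topology, so it satisfies the triangle inequality $D_{j,A} \le D_{j,i} + D_{i,A}$ in both directions.

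The main step is a two-line application of that inequality. From $D_{j,A} \le D_{j,i} + D_{i,A}$ we get
\begin{equation*}
(D_{j,A} + D_{A,i}) - D_{j,i} \;\le\; (D_{j,i} + D_{i,A} + D_{A,i}) - D_{j,i} \;=\; D_{i,A} + D_{A,i} \;=\; 2 D_{A,i},
\end{equation*}
where the last equality uses symmetry of the hop distance, $D_{i,A} = D_{A,i}$ (the ISL grid links are bidirectional). This gives the claimed bound of $2\cdot D_{A,i}$ extra hops. I would also note that this bound is tight: it is attained whenever $s_j$ lies on a shortest path from $s_i$ to $s_A$ (e.g. $s_j = s_i$ itself, giving extra hops $D_{i,A} + D_{A,i} - 0 = 2 D_{A,i}$), so no smaller universal bound is possible.

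The only thing worth being careful about — and the mild "obstacle" — is justifying that $D$ behaves as a symmetric metric here rather than something more delicate: we are implicitly using that hop counts in the $+$Grid ISL topology are symmetric and satisfy the triangle inequality, which holds because the inter-satellite links are undirected and (per the discussion in \S\ref{sec:motivation}) the inter-satellite topology is fixed and stable within a time slot, so $D_{i,j}$ is well-defined independently of routing fluctuations. Given that, the argument is immediate; I would present the triangle-inequality chain above as the whole proof and conclude by remarking that this lemma is exactly what lets constraint~(\ref{eq:delay_constraint}) be enforced using only the per-anchor radius $D_{A,i}$, decoupling the latency check from the identity of the far endpoint $s_j$.
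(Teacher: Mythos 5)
Your proof is correct and is essentially the paper's argument: both reduce the claim to the triangle inequality $D_{j,A}\le D_{j,i}+D_{i,A}$ and then add $D_{A,i}$ to both sides, the only difference being that the paper derives that inequality explicitly from the toroidal $+$Grid distance formula (via the circular-arc argument), whereas you invoke the general fact that shortest-path hop distance in an undirected graph is a symmetric metric, which is a legitimate and slightly more general justification. One minor slip in your side remark: your tightness condition is reversed --- equality $D_{j,A}+D_{A,i}-D_{j,i}=2D_{A,i}$ requires $s_i$ to lie on a shortest path from $s_j$ to $s_A$ (e.g., $s_j$ one hop beyond $s_i$ on the side away from $s_A$), whereas if $s_j$ lies on a shortest path from $s_i$ to $s_A$ the extra hops are $2D_{A,i}-2D_{i,j}$, which attains the bound only when $s_j=s_i$.
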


\begin{proof}
Let $X$ and $Y$ denote the number of orbits and satellites in per orbit of the constellation, respectively. $s_i$ can be denoted as $(x_i,y_i)$, where $x_i$ is the orbit number and $y_i$ is the satellite number in an orbit. Obviously we have $D_{i,j}=\min(\left |x_i-x_j  \right |,X-\left |x_i-x_j  \right | )+\min(\left |y_i-y_j  \right |,Y-\left |y_i-y_j  \right | )$. \revise{Put three points $i,j$ and $A$ on a circle of circumference $X$ and let $\overset{\frown}{ab}$ denote the length of the shorter arc between point $a$ and $b$.} Thus $\overset{\frown}{ij}=\min(\left |x_i-x_j \right |,X-\left |x_i-x_j \right |)$. According to the relationship between the corresponding circle angles of each arc, we have $	\overset{\frown}{jA}-\overset{\frown}{ji} \le \overset{\frown}{Ai}$. We can get the same conclusion when the circumference is $Y$. Therefore, $D_{j,A}-D_{j,i} \le D_{A,i}.$ By augmenting both sides of the inequality with $D_{A,i}$, we have $D_{j,A}+D_{A,i}-D_{j,i}\le2 \cdot D_{A,i}$.
\end{proof}

\begin{theorem}
The ADA problem is NP-hard.
\end{theorem}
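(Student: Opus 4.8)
\noindent\emph{Proof idea.} The plan is to reduce from a classical NP-hard problem to the decision version of ADA, namely: given all the inputs of~(\ref{eq:objective})--(\ref{eq:cluster_constraint}) and a target $L$, does there exist a cluster assignment $C$ satisfying~(\ref{eq:acc_constraint1})--(\ref{eq:cluster_constraint}) whose objective~(\ref{eq:objective}) is at least $L$? The source problem I would use is \emph{Set Cover} (equivalently \emph{Dominating Set}): given a universe $\mathcal{U}=\{e_1,\dots,e_N\}$, a family $\mathcal{S}=\{S_1,\dots,S_M\}$ and an integer $k$, decide whether $k$ sets suffice to cover $\mathcal{U}$. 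The intuition driving this choice is that the latency constraint~(\ref{eq:delay_constraint}) is essentially a \emph{covering} condition: by Lemma~\ref{lemma:detour} the detour incurred by routing a satellite's traffic through its anchor is bounded by twice the anchor-to-satellite hop distance, so forbidding large detours restricts which satellite may anchor which, which is exactly the incidence structure of a set system.

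Concretely, I would lay out a sufficiently large $+$Grid constellation and designate $M$ \emph{candidate-anchor} satellites $\sigma_1,\dots,\sigma_M$ (one per set) and $N$ \emph{element} satellites $\varepsilon_1,\dots,\varepsilon_N$ (one per element), positioned so that the hop distance $D_{\varepsilon_b,\sigma_a}$ is small precisely when $e_b\in S_a$ and large otherwise. For each element I add one user whose ingress satellite stays at $\varepsilon_b$, and I place the GS ingress satellites (the visibility/access pattern $\beta^*$ is a free input of the problem) together with the threshold $H$ so that constraint~(\ref{eq:delay_constraint}) --- invoked through Lemma~\ref{lemma:detour} --- makes it infeasible for $\varepsilon_b$ to be managed by any $\sigma_a$ with $e_b\notin S_a$, and also forbids every non-candidate satellite from anchoring an $\varepsilon_b$. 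Any feasible solution then induces a set cover of $\mathcal{U}$. To force this cover to be small, I add one extra ``budget'' user that, over consecutive time slots, visits $\sigma_1,\sigma_2,\dots,\sigma_M$ in turn; its contribution to~(\ref{eq:objective}) equals $M$ minus (at least) the number of distinct anchors among the $\sigma_a$'s, so setting $L$ to the constant contributed by the element users plus $M-k$ makes ``objective $\ge L$'' entail that the $\varepsilon_b$'s are anchored by at most $k$ distinct candidates. A cover of size $\le k$ then yields a feasible ADA solution of value $\ge L$, and conversely, which gives the equivalence.

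The step I expect to be the main obstacle is the \emph{geometric embedding}: the distances $D_{i,j}$ are not adjustable parameters but are dictated by the rigid torus-like metric of the $+$Grid, so realizing the ``close iff $e_b\in S_a$'' incidence pattern --- while keeping the gadgets mutually far enough apart that their latency constraints do not interfere and the whole instance stays polynomial-size --- needs a careful layout (for instance, dedicating disjoint grid blocks to element/candidate groups and inflating the ``forbidden'' distances by routing through spare orbits and rows). A secondary technical point is to argue the budget gadget behaves monotonically, i.e.\ the maximizing solver cannot beat $M-k$ by cleverly interleaving clusters along the fixed visiting order, which holds because any clustering touching $c$ of the $\sigma_a$'s forces at least $c-1$ cluster boundaries along that order. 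Once the embedding is fixed, correctness and the polynomial size bound are routine, so ADA is NP-hard (and, since the decision version is clearly in NP, NP-complete).
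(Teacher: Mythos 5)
Your overall strategy (reduce Set Cover to the decision version of ADA) is reasonable in principle, but the two load-bearing steps of the construction both fail, and you have correctly sensed where the trouble is without resolving it. The embedding is not merely ``a careful layout'': in a +Grid, $D_{i,j}$ is the fixed $\ell_1$ torus metric, and by Lemma~\ref{lemma:detour} (with your co-located-GS trick) the satellites allowed to anchor $\varepsilon_b$ under constraint~(\ref{eq:delay_constraint}) form an $\ell_1$ ball of radius about $H/2$ around $\varepsilon_b$. Realizing ``$D_{\varepsilon_b,\sigma_a}$ small iff $e_b\in S_a$'' therefore means each element's membership pattern is determined by which of the $M$ diamonds it sits in; an arrangement of $M$ such balls on a two-dimensional grid has only $O(M^2)$ cells, so only polynomially many distinct membership patterns are realizable, whereas general Set Cover needs arbitrary incidence. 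Spreading gadgets over spare orbits does not help; you would have to reduce from an already-geometric covering problem (e.g., dominating set on grid graphs), which is a substantially different proof that you have not carried out.

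The second gap is that the budget gadget measures the wrong quantity. The objective~(\ref{eq:objective}) contains no term counting how many satellites act as anchors; it only rewards consecutive slots in which a user's ingress satellites lie in the same cluster. Your budget user's loss lower-bounds the number of distinct clusters to which $\sigma_1,\dots,\sigma_M$ \emph{themselves belong}, which is decoupled from how many candidates anchor the $\varepsilon_b$'s: nothing in~(\ref{eq:acc_constraint1})--(\ref{eq:cluster_constraint}) forces an anchoring candidate to be its own anchor, so two candidates that each anchor elements may be managed by one common third satellite (the budget user then crosses no boundary), while conversely, if the candidates are placed far apart to enforce the incidence pattern, the latency constraint forces them into pairwise distinct clusters and the budget user crosses $M-1$ boundaries no matter what, so the gadget carries no information. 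Since your static element users contribute a fixed amount regardless of which feasible anchor they get, the entire objective hinges on this gadget, and neither direction of ``objective $\ge L$ iff cover of size $\le k$'' is established. Note also that the paper's hardness argument has a different flavor altogether: it fixes $H=2$ so that Lemma~\ref{lemma:detour} makes every admissible cluster a radius-one star, enumerates the $O(m)$ candidate clusters, weights each by the user visibility it covers, and uses the disjointness imposed by~(\ref{eq:cluster_constraint}) to relate the restricted problem to the disjunctively constrained knapsack problem---i.e., the difficulty in ADA is a packing/conflict structure over overlapping candidate clusters, not a covering-with-budget structure, which is another sign that a Set-Cover-with-budget encoding is fighting the objective rather than exploiting it.
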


\begin{proof}
	
Assume users only see one satellite each time, thus $\alpha_{i,p,k}=\alpha_{i,p,k}^*$. We simplify this problem by setting $H$ to 2. According to lemma~\ref{lemma:detour}, the number of hops between any satellite and its satellite anchor should not exceed 1. In a +Grid topology, all candidate clusters satisfying the latency constraint can be enumerated in $O(m)$. For each candidate cluster $S_c$, $w(S_c)=\sum_{i=1}^{n}\sum_{p \in S_c}\sum_{k=1}^{T}\alpha_{i,p,k}-1$, where $w(S_c)$ is the weight of $S_c$. Then, By treating each candidate cluster as an item, its weight as an item value, and requiring that the items represented by two candidate clusters having common satellites cannot be taken together, ADA problem can be reduced to the disjunctively constrained knapsack~(DCK) problem~\cite{DCKP} with unlimited capacity in polynomial time, which is NP-hard.
\end{proof}

\setlength{\textfloatsep}{0.12cm}
\begin{algorithm}[tb]
	\caption{Pattern Discovery}\label{TD}  
	\textbf{Input:} visible satellite set within an expected time, $S^*$; \\maximum extra hop, \emph{$H$}\\ 
	\textbf{Output:} satellite cluster pattern, \emph{$f$}
	\begin{algorithmic}[1]
		\State{$s_A \gets -1,S' \gets \o,maxSize \gets 0 $}
		\For{$s_{i}$ in $S^*$}\Comment{\emph{Candidate satelite anchor}}
		\State{$S_{candidate}' \gets \o$}\Comment{\emph{Candidate satelite cluster}}
		\For{$s_{j}$ in $S^*$}
		\If{$2 \cdot D_{j,i} \le H$} \Comment{\emph{Lemma~\ref{lemma:detour}}}
		\State{$S_{candidate}'$.add$(s_j)$}
		\EndIf
		\EndFor
		\If{$|S_{candidate}'| \ge maxSize$}
		\State{$s_A \gets s_i, S' \gets S_{candidate}', maxSize \gets |S'|$}
		\EndIf
		\EndFor
		\State{$f \gets$ getPattern$(s_A, S')$}\\
		\Return{$f$}
	\end{algorithmic}
\end{algorithm}
\setlength{\floatsep}{0.12cm}

\subsection{Judiciously Deploying Anchors at LEO Satellites}
\label{subsec:design_algorithms}

We decompose the above problem to solve it efficiently. Specifically, we first (i) determine an anchor deployment and cluster division scheme, and (ii) design an algorithm for users to select anchors to maximize connection uninterrupted ratio.

Traditional anchor placement and service area division schemes in terrestrial mobile networks are often based on experience, such as the main geographical movement area of users (\eg a building). According to this, we get a key insight: \emph{traditional users' usual geographical movement area can be mapped to a set of satellites visible to the user within a certain period in ISTNs.} \revise{Therefore, we try to find a proper \emph{satellite cluster pattern}, which is defined as a mapping relationship from satellite nodes to satellite sets and denoted as $f(s_A) \to S_A$, where $s_A$ is the satellite anchor of satellite set $S_A$ and $S_A$ is also called an \emph{instance}. In particular, $f$ should satisfy that the relative positions between all satellites and the satellite anchor in each instance are fixed. To reduce the times of switching clusters, users should see at least one satellite in the instance for a long time. Then, we use as few instances as possible to cover all satellites in the entire constellation.}

\noindent
\textbf{Anchor deployment.} Algorithm~\ref{TD} introduces the details of finding a satellite cluster pattern from visible satellites in a certain period. For each candidate anchor, we calculate the maximum number of visible satellites satisfying the latency constraint (line 4-6). \revise{After traversing all candidate anchors, we adopt the largest candidate satellite cluster and get a pattern instance (line 7-8), from which we can derive the pattern (line 9).} Algorithm~\ref{APP} illustrates our heuristic anchor deployment algorithm. In each loop, we preferentially select a satellite as an anchor, whose cluster instance contains the most remaining satellites (line 3-8). At the end of each loop, satellites belonging to the instance are managed by this anchor (line 9-10).

\noindent
\textbf{Anchor assignment.} The key ideas of our greedy anchor assignment algorithm are users \revise{(i) keep connecting to connected satellites and satellites in the same cluster whenever possible }and (ii) select a new cluster that will be visible for the longest time in initial or when the previous cluster is out of the view.

\begin{theorem}
Given an anchor deployment and cluster division result, the greedy anchor assignment algorithm is optimal.
\end{theorem}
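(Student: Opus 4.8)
The plan is to peel the constellation-wide claim down to a single user. With the deployment and cluster division $C$ fixed, note that for a given user $u_i$ and slot $t_k$ there is exactly one $p$ with $\alpha^{*}_{i,p,k}=1$ and exactly one $q$ with $\alpha^{*}_{i,q,k-1}=1$, and $\sum_{A}C_{p,A}C_{q,A}$ equals $1$ precisely when $s_p$ and $s_q$ lie in the same cluster. Hence the inner triple sum in~(\ref{eq:objective}) is the indicator that $u_i$ stays with the same anchor across $t_{k-1},t_k$, so the objective equals $n(T-1)$ minus the total number of inter-cluster handovers and is separable over users. It therefore suffices to show that, for each user independently, the greedy rule minimizes that user's number of cluster changes.

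Next I would reformulate one user's problem. For cluster $c$ let $V_c\subseteq\{1,\dots,T\}$ be the set of slots in which some satellite of $c$ is visible to $u_i$; a feasible assignment is a map $\sigma$ on slots with $k\in V_{\sigma(k)}$ for every $k$, and its cost is $|\{k\ge 2:\sigma(k)\neq\sigma(k-1)\}|$. Two observations make this well posed. First, since the cluster division was built so that every satellite of a cluster is within the hop budget of its anchor (Lemma~\ref{lemma:detour} and Algorithm~\ref{TD}), assigning $u_i$ to \emph{any} currently available cluster automatically satisfies constraint~(\ref{eq:delay_constraint}), so latency can be dropped here. Second, because the user has a visible satellite in every slot (full coverage), every slot lies in some $V_c$, so feasible assignments exist; a minimum-cost $\sigma$ then partitions $\{1,\dots,T\}$ into a minimum number $R^{*}$ of maximal ``monochromatic'' runs, with cost $R^{*}-1$. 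So the per-user claim is the standard fact that a particular greedy solves this minimum-segmentation problem optimally.

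The core step is a ``greedy stays ahead'' exchange argument. The greedy keeps the user on its current cluster while that cluster still has a visible satellite, and at slot $1$ and at each slot where the current cluster becomes unavailable it jumps to the cluster whose availability extends \emph{contiguously} furthest into the future from that slot. Let $e_r$ (resp.\ $f_r$) be the last slot covered after the greedy (resp.\ an optimal $\sigma$) has used $r$ runs, with $e_0=f_0=0$. I would prove $e_r\ge f_r$ by induction: the optimum's $(r{+}1)$-st run covers $\{f_r{+}1,\dots,f_{r+1}\}\subseteq V_{c'}$ for some $c'$; since $e_r\ge f_r$, either $e_r\ge f_{r+1}$, in which case $e_{r+1}\ge e_r+1>f_{r+1}$ (every slot is coverable), or $e_r+1\le f_{r+1}$, in which case $\{e_r{+}1,\dots,f_{r+1}\}\subseteq V_{c'}$, so the greedy's choice at slot $e_r+1$ reaches at least $f_{r+1}$ and $e_{r+1}\ge f_{r+1}$. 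Taking $r=R^{*}$ gives $e_{R^{*}}\ge f_{R^{*}}=T$, so the greedy finishes in at most $R^{*}$ runs, i.e.\ with at most $R^{*}-1$ handovers; summing the per-user optima recovers optimality of~(\ref{eq:objective}).

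The routine parts (the decomposition and the stay-ahead induction) are standard; the delicate points, and where I expect the write-up to need the most care, are: (a) pinning down that ``the cluster visible for the longest time'' means ``contiguous availability reaching furthest from the current slot'', since this is exactly what the exchange argument consumes and a different tie-break (e.g.\ maximizing $|V_c|$) would not be optimal; and (b) justifying that restricting a user's choice to whole clusters rather than individual satellites (which part~(i) of the rule also governs) loses nothing — this holds because the objective depends on the ingress satellite only through its anchor, hence only through its cluster. Point (a) is the main obstacle.
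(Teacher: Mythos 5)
Your proof is correct, but it takes a genuinely different route from the paper's. The paper proves the theorem by an exchange/contradiction argument: it considers the optimal assignment $Opt$ that agrees with the greedy $Gr$ on the longest prefix of time slots (say the first $K$ slots), and shows $Opt$ can be modified --- by copying $Gr$'s choice at slot $K+1$ and, when that choice opens a new cluster, holding it for the $\delta$ slots during which that cluster remains visible --- into an equally good optimal solution agreeing with $Gr$ on a strictly longer prefix, contradicting the maximality of $K$; like you, it works implicitly with a single user and reads ``visible for the longest time'' as contiguous visibility from the switching slot. You instead make the per-user separability of objective~(\ref{eq:objective}) explicit (the objective is $n(T-1)$ minus the number of inter-cluster handovers), observe via Lemma~\ref{lemma:detour} and the way Algorithm~\ref{TD} builds clusters that constraint~(\ref{eq:delay_constraint}) holds for any cluster the user may pick (so latency drops out), reduce each user's problem to minimum segmentation over cluster-visibility intervals, and close with a ``greedy stays ahead'' induction on the last covered slot. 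What your route buys: the decomposition, the irrelevance of the latency constraint, and the reduction to whole-cluster choices are justified rather than assumed, and the invariant $e_r \ge f_r$ avoids having to verify that the paper's modified solution remains feasible and no worse, a step the paper leaves implicit. What the paper's route buys is brevity: no reformulation machinery, just one exchange step. Your flagged concern (a) is exactly the reading the paper's own proof relies on --- its $\delta$ is the contiguous visibility duration of the newly chosen cluster --- so both arguments hinge on the same interpretation of the greedy's ``longest visible'' rule, and your proposal is sound under it.
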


\begin{proof}
\revise{Let $s_{Alg(k)}$ denote the selected satellite anchor for a user in time $k$ with algorithm $Alg$. Denote the greedy algorithm is $Gr$ and the optimal algorithm closest to $Gr$ is $Opt$, which means that there is a maximum $K$ so that $Gr$ and $Opt$ choose the same satellite anchors in the first $K$ time slots. Then we have $s_{Gr(K)} = s_{Opt(K)}$, $s_{Gr(K+1)} \neq s_{Opt(K+1)}$ and $s_{Gr(K)} \neq  s_{Opt(K+1)}$. If $s_{Gr(K)} = s_{Gr(K+1)}$, then we can select $s_{Gr(K+1)}$ in $t_{K+1}$ based on $Opt$, so as to get an optimal solution $Opt'$ closer to $Gr$. If $s_{Gr(K)} \neq s_{Gr(K+1)}$ or $K = 0$, we can select $s_{Gr(K+1)}$ from $t_{K + 1}$ to $ t_{K + \delta }$ based on $Opt$ and get an optimal solution $Opt'$ closer to $Gr$, where $\delta$ is the longest time the cluster of $s_{Gr(K+1)}$ is visible to the user. Therefore, $Gr$ is proven to be optimal by the above contradiction.}
\end{proof}

\setlength{\textfloatsep}{0.12cm}
\begin{algorithm}[tb]
	\caption{Anchor Deployment and Cluster Division}\label{APP}  
	\textbf{Input:} satellite set, $S$; satellite cluster pattern, $f$\\ 
	\textbf{Output:} cluster division matrix, $C$
	\begin{algorithmic}[1]
		\State $R \gets S $ \Comment{\emph{$R$ records remaining satellites}}
		\State{$C \gets \boldsymbol{O}_{|S|,|S|}$}
		\While{$R\neq\varnothing$}
		\State{$maxCov \gets 0, Anc \gets -1$}
		\For{$s_A$ in $S$}
		\State{$Overlap \gets |f(s_A)$$\cap R|$}
		\If{$ Overlap \ge maxCov$}
		\State{$Anc \gets A,maxCov \gets Overlap$}
		\EndIf
		\EndFor
		\For{$s_i$ in $f(s_{Anc})$}
		\State{$C_{i,Anc} \gets 1$}	\Comment{\emph{Update cluster division}}
		\EndFor
		\State{$R \gets R-f(s_{Anc})$}
		\EndWhile\\
		\Return{$C$}
	\end{algorithmic}
\end{algorithm}
\setlength{\floatsep}{0.12cm}
	\section{Performance Evaluation}
\label{sec:evaluation}

We build a container-based ISTN simulator based on \cite{starrynet} as described in \S\ref{subsec:motivation_experiment_setup}. We conduct experiments based on data-driven simulations to explore the following aspects about \name: (i) can \name accomplish high connection uninterrupted ratio and low latency under both user movement and high LEO dynamics? (ii) how much network- and system-level cost does \name involve at satellite anchors?

\subsection{Experiment Setup}

\noindent

\noindent
\textbf{\name prototype.} Each satellite router builds one TCP connection with its corresponding anchor for reliable location updates. \revise{Anchors use DHCPv6\cite{dhcpv6} to allocate addresses for users.} \name leverages Segment Routing~(SR)~\cite{rfc8402} to realize the tunnel-based and convergence-free route mechanism mentioned in \S\ref{subsec:design_route_management}. We implement \name's deployment and assignment algorithms in Python. The anchor manager collects satellites' two-line element sets~(TLEs)\cite{TLE} from \cite{celestrak} to obtain constellation deployment and satellite trajectory data.

\noindent
\textbf{Constellation information.} Our simulation is based on the public information of two different constellations. The first is a complete shell of Starlink, including 1584 satellites in 72 orbits at an altitude of 540km~\cite{starlink_fcc}. The second is a shell in the Kuiper constellation which plans to deploy 1296 satellites in 36 orbits at 610km altitude~\cite{kuiper_fcc}. We also simulate scenarios with Inmarsat satellites which work at geostationary orbit and have provided commercial Internet services for airplanes~\cite{inmarsat}.

\noindent
\textbf{Model-based traffic generation.} For Starlink, we use geographic distribution of GSs published by~\cite{starlink_GS}. Since Amazon provides complete satellite services with their global infrastructure, we consider their 12 Amazon Web Services~(AWS) ground stations~\cite{Amazon_GS} and 105 available zones (AZs)~\cite{Amazon_server} as distributed GSs for Kuiper. For users, we follow the methodology proposed in \cite{pachler2021updated} which assumes 0.1\% of the population in each cell (1 latitude $\times$ 1 longitude) of the Starlink availability map\cite{map} are ISTN users. We also collect flight traces of two different airlines to mimic user mobility, one across the North Pacific and the other across the North Atlantic\cite{airline1,airline2}. Users in the evaluation access the Akamai content delivery network server~\cite{Akamai} closet to each GS.

\noindent
\textbf{Comparison.} We compare \ourname with Reliable Extreme Mobility Management (REM)\cite{REM}, ATOM\cite{ATOM} and Network-based Distributed Mobility Management (NDMM)\cite{DMM}. REM is designed for cellular networks based on ground anchors for user mobility. We extend REM in ISTNs by choosing a nearby GS as the anchor for mobile users. ATOM aims to select a better interface between cellular and WiFi networks for improve the quality of experience (QoE). In the evaluation of applying ATOM in ISTNs, anchors in ground stations will determine whether to forward traffic to the satellite or terrestrial networks to reduce latency. NDMM offloads the anchor functionalities to the network edge, which is considered as all satellites in the constellation. For \name, we limit $H$ (\ie the maximum extra hop as defined in \S\ref{subsec:design_formulation}) to $\frac{X+Y}{2}$, where $X$ and $Y$ is the number of orbits and satellites in per orbit, respectively.

\subsection{Connection Uninterrupted Ratio~(CUR)}
Figure~\ref{fig:constellation_reliability} plots the CUR in different constellations. The CUR only varies from 43.2\% to 79.3\% for both server-to-user and user-to-server traffic in REM and ATOM, because both data packets and location update messages suffer from the frequent and long-time routing instability due to handovers happening in GSs. NDMM can significantly improve the CUR in server-to-user traffic but only reaches about 60\% CUR in the user-to-server traffic. The main reason is that anchors on satellites in NDMM only manage the mobility of users and do not take into account the network location changes of GSs, thus making the user-to-server connection is also interrupted by route instability. \name reaches 98.7\% and 99.1\% of CUR in both directions of traffic. On the one hand, the location notification can be correctly and rapidly updated to the corresponding satellite anchor(s) due to stable ISL connections. On the other hand, anchors in \name provide convergence-free route mechanism by managing the location of network infrastructures (\ie GSs).

Figure~\ref{fig:airline_reliability} plots the CUR when users move in two different airlines. \revise{Firstly, the user-to-server CUR does not decrease significantly for all mechanisms although users are moving,} because this is mainly dependent on the space-ground routing instability, which does not change because the ground stations are still fixed. Secondly, the server-to-user CUR deceases by 4.3\% and 4.7\% on average for REM and ATOM. The main reason is that users' movement leads to an increase in the frequency of switching ingress satellites and ground stations (anchors), so the number of location update failures and the number of IP address changes also increase, which reduces the CUR. However, \name can still achieve more than 97.4\% CUR because users can still connect to satellites in the same cluster during their movements, which can keep the IP address unchanged. Finally, despite the use of geostationary satellites~(Inmarsat), which is stationary relative to the ground, the CUR is only 1.9\% to 2.4\% higher than \name, because the high speed of user movement still causes the change of anchor points in flight trajectories.

\begin{figure}[tbp]
	\centering
	\subfloat[CUR in Starlink (left) and Kuiper (right) topology.]{
		\begin{minipage}[t]{\linewidth}
			\centering
			\includegraphics[width=1.0\linewidth]{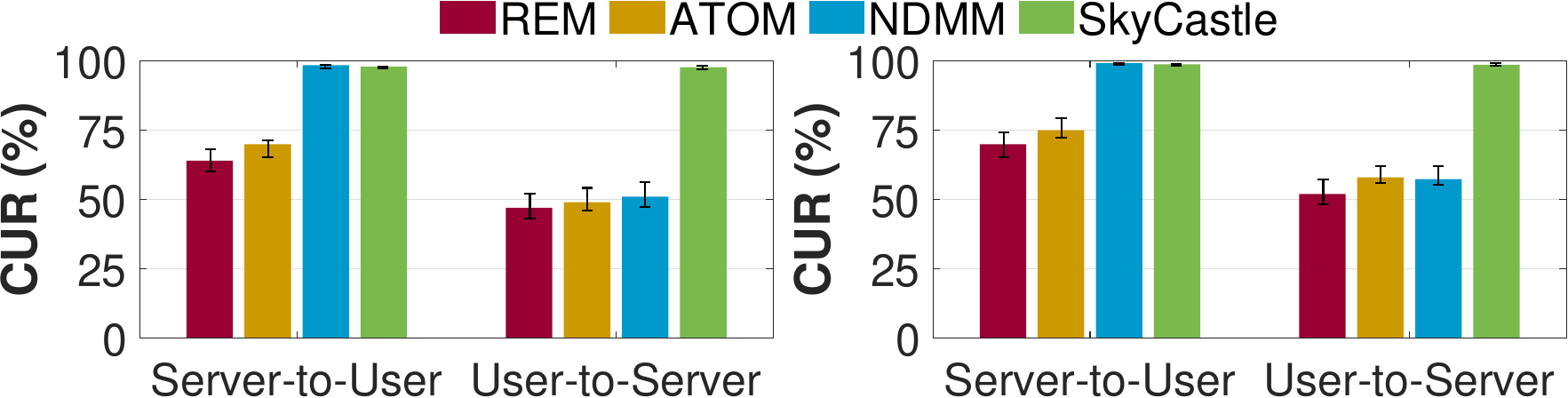}
			\label{fig:constellation_reliability}
			\vspace{-0.4cm}
		\end{minipage}%
	}

	\subfloat[CUR in Flight-United 857 (left) and Flight-Virgin Atlantic 138 (right).]{
		\begin{minipage}[t]{\linewidth}
			\centering
			\includegraphics[width=1.0\linewidth]{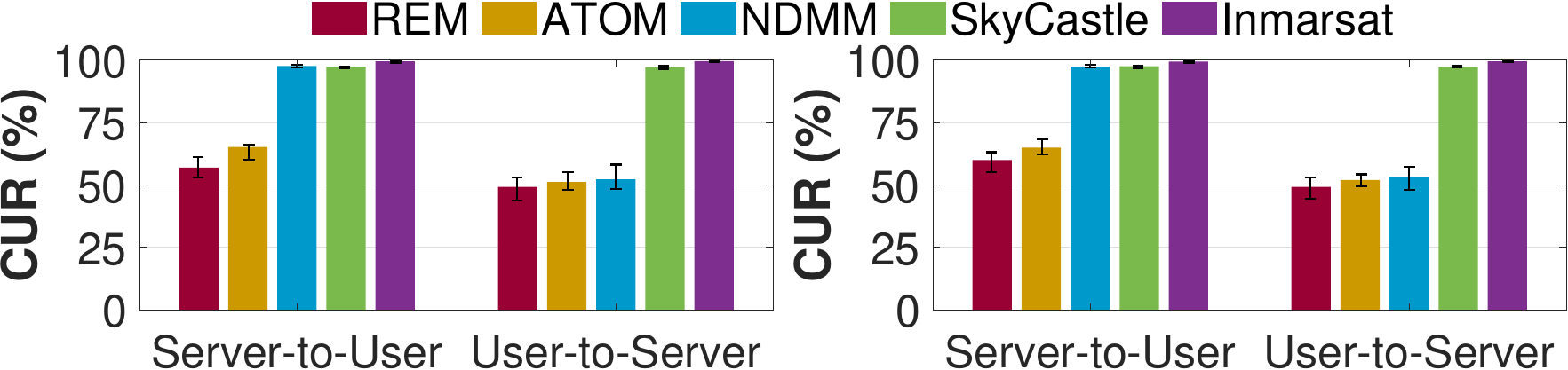}
			\vspace{-0.4cm}
			\label{fig:airline_reliability}
		\end{minipage}%
	}%
	\centering
	\caption{Connection uninterrupted ratio for different constellation topologies and flight trajectories.}
	\label{fig:network_reliability}
\end{figure}

\subsection{User-perceived Latency}

Figure~\ref{fig:constellation_RTT} plots the RTT between global users and servers in different constellations using different MM mechanisms. REM, ATOM, and NDMM all suffer from high latency due to the long distance between users and their anchors. Users in REM and ATOM naively registers with the nearest GS as the anchor, while the anchor in NDMM and \name is usually near the user's path to access the server. Therefore, unwise anchor selection can lead to serious detour and increase the latency. However, NDMM suffers from a high latency tail (up to 414.56ms in Starlink and 622.72ms in Kuiper) because the distances between anchor points on satellites and users increase rapidly over time. Although anchors are also deployed on satellites, \name reduces the latency by up to 47.8\% through changing anchors for users with a latency constraint.

\begin{figure}[tbp]
	\centering
	\subfloat[RTT in different constellation topologies.]{
		\begin{minipage}[t]{\linewidth}
			\centering
			\includegraphics[width=1.0\linewidth]{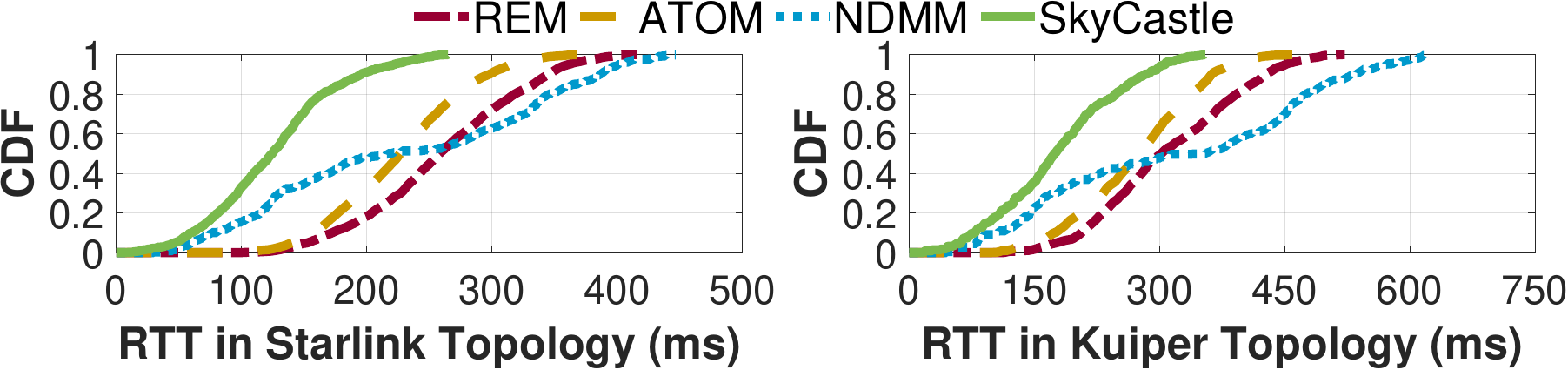}
			\vspace{-0.4cm}
			\label{fig:constellation_RTT}
		\end{minipage}%
	}
	
	\vspace{-0.2cm}
	\subfloat[RTT in different airlines.]{
		\begin{minipage}[t]{\linewidth}
			\centering
			\includegraphics[width=1.0\linewidth]{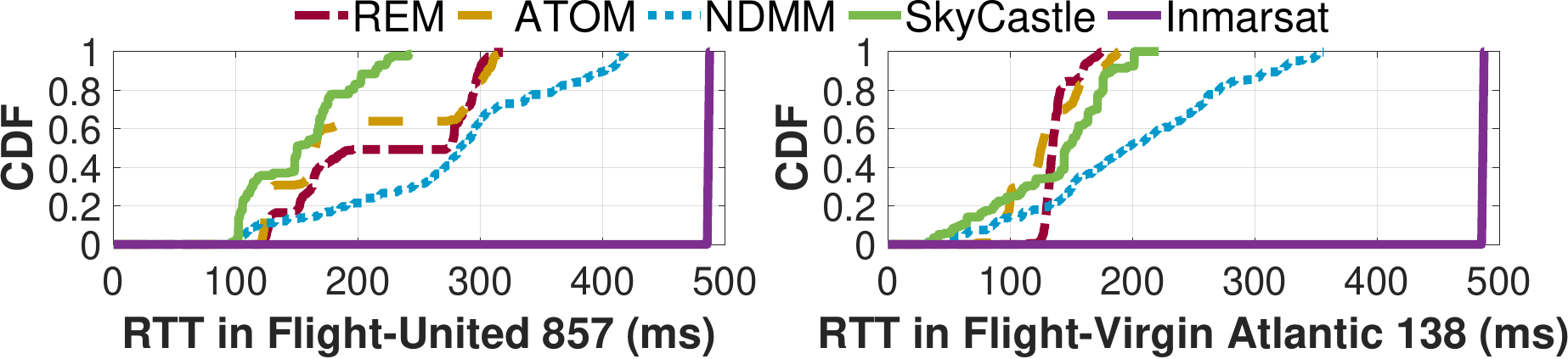}
			\vspace{-0.4cm}
			\label{fig:airline_RTT}
		\end{minipage}%
	}%
	\centering
	\caption{User-perceived RTT in global scale and different airlines.}
	\label{fig:delay}
\end{figure}

Figure~\ref{fig:airline_RTT} plots the RTT while users are moving in two different airplanes using different research mechanisms and an actual commercial technique. First, compared with Inmarsat, which consists of satellites working at the geostationary orbit, Starlink and Kuiper can reduce latency significantly by using LEO satellites. However, as the distance between the satellite anchor and the aircraft increases in NDMM, the highest latency has reaches 86.3\% of the latency of using Inmarsat. \name can achieve the lowest latency in the airline of United 857 than other mechanisms. The maximum latency in \name is slightly higher than that in REM and ATOM when users are moving at the airline of Virgin Atlantic 138. The primary reason for this is the proximity of GSs to the flight trajectory, which results in a shorter distance between the user's ingress satellite and the anchor in GS.

\subsection{Frequency of IP Address Update}

Table~\ref{tab:ip_change} shows the average number of IP address changes and handovers times per hour. Firstly, \name can reduce IP address change frequency by up to 69.1\% and 83.1\% in Starlink and Kuiper compared with REM and ATOM. The reason is that the high-speed movement of users is negligible for satellites, but travels for a long distance on the ground, thus connecting to many nearby GSs and reacquiring new IP address frequently. \revise{Besides, \name reduces handover times by about 23.5\% because users prefer connecting either to their current satellite or to new satellites within the same cluster.} Users do not change IP addresses in NDMM because they always consider the initial ingress satellite as anchors, which causes latency to increase by 46.6\% compared to \name.

\begin{table}[tb]
	\centering
	\small
	\resizebox{\columnwidth}{!}{
		\begin{tabular}{c|c|c|c|c|c}
			\hline
			\toprule[1pt]

			\textbf{Constellation}  & 
			\tabincell{c}{\textbf{Flight}}       & \tabincell{c}{\textbf{REM} \\ {\cite{REM}}}    &  \tabincell{c}{\textbf{ATOM} \\ \text{\cite{ATOM}}}    & \tabincell{c}{\textbf{NDMM} \\ \text{\cite{DMM}}} & \tabincell{c}{\name \\ (this paper)}     \\
			\midrule[0.5pt]
				\multirow{2}{*}{\tabincell{c}{\textbf{Starlink} \\ \textbf{(201 GSs)}}} &  \textbf{United 857}      &  5.5/50.3   &   5.5/47.3        &   0/50.3  &  1.7/39.8 \\
			& \textbf{Virgin Atlantic 138}   & 3.2/54.7     &  3.2/51.2        &     0/54.7  &  1.6/41.5   \\
			\midrule[0.5pt]
			\multirow{2}{*}{\tabincell{c}{\textbf{Kuiper} \\ \textbf{(117 GSs)}}} & \textbf{United 857}      &  6.5/38.5   &   6.5/34.6        &   0/38.5  &  1.1/33.9 \\
			& \textbf{Virgin Atlantic 138}   & 2.4/43.5     &  2.4/37.2       &     0/43.5  &  1.1/34.8   \\
			\bottomrule[1pt]
			\hline
	\end{tabular}}
	\caption{Average number of IP changes / handovers per hour.}
	\label{tab:ip_change}
\end{table}

\subsection{Cost Analysis}

For evaluating more realistic system-level overhead, we also deploy various mechanisms on a Raspberry Pi 4 computer and replace a container in the simulator, which has been tested in a real space environment, to demonstrate the advantages of \name. Figure~\ref{fig:overhead} plots the network-level and system-level overhead while running different mechanisms.

Figure~\ref{fig:ctrl} plots the control overhead required by different mobility management mechanisms, which is defined as the number of transmission hops taken by all control messages per second. We divide the control overhead into location and route management overhead. Firstly, the location management overhead of \name is lower than other mechanisms because of the reduction of handover times and the transmission of most update messages are limited within a cluster. Secondly, the high route management overhead of REM, ATOM, and NDMM comes from the route update messages caused by the handovers between GSs and satellites. However, in \ourname the route recovery after handovers in GSs can also be solved by the location management and route mechanism of satellite anchors without route convergence. Therefore, there is no additional route overhead in the same simulated scenario.

Figure~\ref{fig:CPU} plots the CPU usage of the Raspberry Pi 4 while running different mechanisms. In REM, ATOM, and NDMM, satellites suffer from higher CPU usage because they are required to handle route calculation and update users' location simultaneously. Specially, in NDMM, the satellite may also act as an anchor to perform MM functions. In \ourname, as similar to NDMM, the satellite anchor needs to handle location update messages and maintain a location database. However, there is no need for frequent route calculation triggered by handovers, which significantly reduces the CPU usage.

\begin{figure}[tbp]
	\centering
	\subfloat[Control overhead.]{
		\begin{minipage}[t]{0.48\linewidth}
			\centering
			\includegraphics[width=1.0\linewidth]{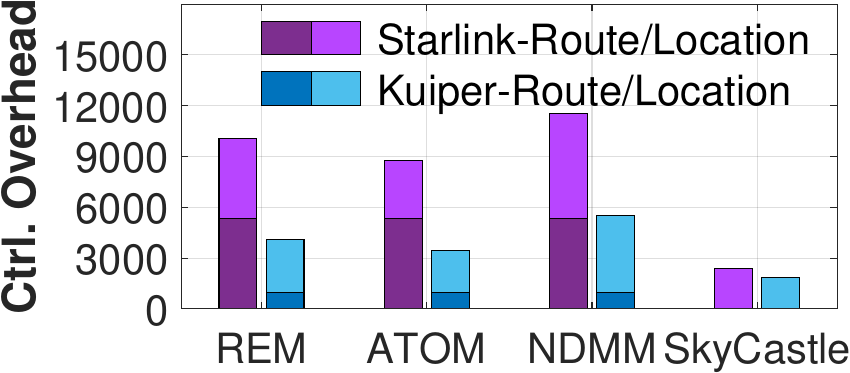}
			\label{fig:ctrl}
			\vspace{-0.2cm}
		\end{minipage}%
	}\hspace{0.15cm}
	\subfloat[CPU usage.]{
		\begin{minipage}[t]{0.46\linewidth}
			\centering
			\includegraphics[width=1.0\linewidth]{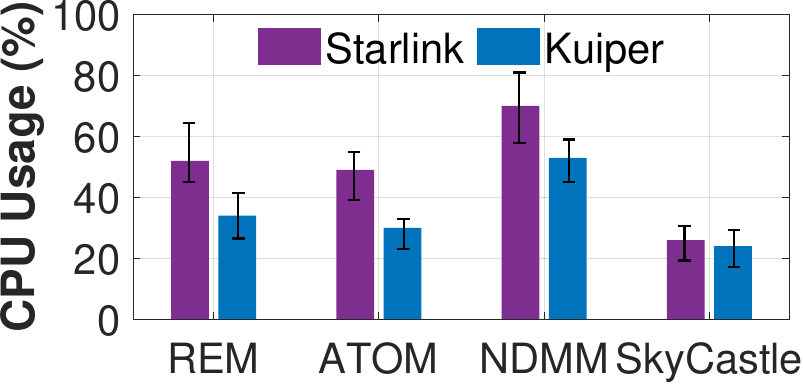}
			\label{fig:CPU}
			\vspace{-0.2cm}
		\end{minipage}%
	}%
	\centering
	\caption{Network- and system-level overhead.}
	\label{fig:overhead}
\end{figure}
	\section{Related Work}
\label{sec:related_work}

\noindent
\textbf{Anchor-based IP mobility management.} The IETF work groups have standardized many RFCs for using anchors to handle mobility management problem~\cite{MIP,HMIPv6,FMIPv6,PMIPv6,PFMIPv6,DMM} in WiFi networks. In cellular networks, anchors also play a big role in the evolution of communication technologies (\eg from 4G to 5G)~\cite{li2016instability,hassan2022vivisecting}. UbiFlow~\cite{ubiflow} uses distributed controllers to manage mobility of devices in the software defined Internet of things~(IoT). Recent works\cite{SIGMA,DIP} attempt to extend these solutions to ISTNs by setting location management functions at GSs. Due to a lack of consideration for the network infrastructure mobility, location updates may fail, leading to low availability. In \cite{HiMIPv6}, a user sets the first access satellite as the anchor. However, authors do not consider the mobility of anchors, reducing other network performance.

\noindent
\textbf{Anchorless IP mobility management.} NDM proposed in \cite{NDM} reduces the overhead of mobility management by managing users' handover in batches, but users need to reacquire IP addresses after handovers. There has been an attempt to break the constraints of binding identifier and locator in one IP address to support mobility management~\cite{HIP,ILNP} but fails to be popularized due to high deployment cost. This idea inspired recent efforts\cite{LISP_LEO}, which embed user's geo-location in IP address to reduce the impact of mobility. However, a geo-location may be covered by multiple satellites, and this last-hop ambiguity problem leads to routing failure.
	\section{Conclusion}
\label{sec:conclusion}

Efficiently managing global mobility is a crucial issue for satellite operators. Compared to traditional mobility management, network-layer mobility management in ISTNs presents unique challenges. In this paper, we propose a novel approach by deploying anchors in fast-moving satellites. We introduce \name, a global mobility management mechanism for ISTNs that facilitate seamless and low-latency Internet services with decent network performance and acceptable control overhead. Trace-driven evaluations demonstrate that our solution can significantly improve connection uninterrupted time by up to 55.8\% and reduce latency by 47.8\%.

\section{Acknowledgment}
This work was supported by the National Key R\&D Program of China (No. 2022YFB3105202) and National Natural Science Foundation of China~(NSFC No. 62372259 and No. 62132004).
	
	\bibliographystyle{unsrt}
	\bibliography{skycastle}
	
\end{document}